\newtheorem{lemma}{Lemma}
\theoremstyle{remark}
\DeclarePairedDelimiterX\MeijerM[3]{\lparen}{\rparen}%
{\begin{smallmatrix}#1 \\ #2\end{smallmatrix}\delimsize\vert\,#3}
\newcommand\MeijerG[8][]{%
  G^{\,#2,#3}_{#4,#5}\MeijerM[#1]{#6}{#7}{#8}}
\newcommand\MeijerG*[7]{%
  G^{\,#1,#2}_{#3,#4}\MeijerM*{#5}{#6}{#7}}
\newcounter{problem}
\newcommand{\Pn}{\mathcal{P}_{\arabic{problem}}}
\newacronym{3d}{3-D}{three-dimensional}
\newacronym{3gpp}{3GPP}{3rd Generation Partnership Project}
\newacronym{5g}{5G}{fifth generation}
\newacronym{ao}{AO}{alternating optimization}
\newacronym{aoa}{AoA}{angle of arrival}
\newacronym{aod}{AoD}{angle of departure}
\newacronym{as}{AS}{antenna selection}
\newacronym{awgn}{AWGN}{additive white gaussian noise}
\newacronym{bs}{BS}{base station}
\newacronym{csi}{CSI}{channel state information}
\newacronym{chest}{CHEST}{channel estimation}
\newacronym{ccdf}{CCDF}{complementary cumulative distribution function}
\newacronym{dl}{DL}{downlink}
\newacronym{ee}{EE}{energy efficiency}
\newacronym{er}{ER}{ergodic rate}
\newacronym{fp}{FP}{fractional programming}
\newacronym{iid}{i.i.d.}{independent and identically distributed}
\newacronym{kkt}{KKT}{Karush–Kuhn–Tucker}
\newacronym{ldt}{LDT}{Lagrangian Dual Transform}
\newacronym{los}{LoS}{line-of-sight}
\newacronym{mimo}{MIMO}{multiple-input multiple-output}
\newacronym{m-mimo}{mMIMO}{massive multiple-input multiple-output}
\newacronym{mrc}{MRC}{maximum ratio combining}
\newacronym{mm}{MM}{majoration-minimization}
\newacronym{mcs}{MCs}{Monte-Carlo simulation}
\newacronym{nlos}{NLoS}{non-line-of-sight}
\newacronym{nlp}{NLP}{non-linear problem}
\newacronym{pa}{PA}{power amplifier}
\newacronym{pc}{PC}{pilot contamination}
\newacronym{qos}{QoS}{quality of service}
\newacronym{rf}{RF}{radio frequency}
\newacronym{re}{RE}{resource efficiency}
\newacronym[plural=RISs]{ris}{RIS}{reconfigurable intelligent surface}
\newacronym{se}{SE}{spectral efficiency}
\newacronym{sfp}{SFP}{sequential fractional programming}
\newacronym{sinr}{SINR}{signal-to-interference-plus-noise ratio}
\newacronym{snr}{SNR}{signal-to-noise ratio}
\newacronym{sdr}{SDR}{semidefinite relaxation}
\newacronym{tdd}{TDD}{time-division duplex}
\newacronym[plural=UEs, firstplural=users' equipment (UEs)]{ue}{UE}{user's equipment}
\newacronym{ul}{UL}{uplink}
\newacronym{upa}{UPA}{uniform planar array}
\newacronym{uspa}{USPA}{uniform squared planar array}
\newacronym{ula}{ULA}{uniform linear array}
\newacronym{vr}{VR}{visibility region}
\newacronym{wsr}{WSR}{weighted sum-rate}
\newacronym{xl-mimo}{XL-MIMO}{extra-large scale massive MIMO}
\newacronym{zf}{ZF}{zero-forcing}
\newcommand{\diag}{{\rm diag}} 
\newcommand{\D}{\mathbf{D}}
\newcommand{\G}{\mathbf{G}}
\newcommand{\F}{\mathbf{F}}
\newcommand{\etaee}{\eta_{{\rm EE}}}
\newcommand{\pbs}{P_{{\rm BS}}}
\newcommand{\pris}{P_{{\rm RIS}}}
\newcommand{\ptx}{P_{{\rm TX}}}
\newcommand{\customstar}{\mathbin{\text{\raisebox{-0.4ex}{\scalebox{0.9}{$\star$}}}}}
\begin{document}
\title{Energy Efficiency Maximization for Intelligent Surfaces Aided Massive MIMO with Zero Forcing}
\author{{Wilson de Souza Junior} and Taufik Abrão
\thanks{This work was supported by the National Council for Scientific and Technological Development (CNPq) of Brazil, Grant: 310681/2019-7, by CAPES, Grant: FC001, and by Londrina State University (UEL), Brazil.}
\thanks{W. Junior and T. Abrão are with the Department of  Electrical Engineering (DEEL). State University of Londrina  (UEL). Po.Box 10.011, CEP:86057-970, Londrina, PR, Brazil.  Email: wilsoonjr98@gmail.com; taufik@uel.br}
}
\maketitle

\begin{abstract}
In this work, we address the \gls{ee} maximization problem in a downlink communication system utilizing \gls{ris} in a multi-user \gls{m-mimo} setup with \gls{zf} precoding. The channel between the \gls{bs} and \gls{ris} operates under a Rician fading with Rician factor $K_1$. Since systematically optimizing the \gls{ris} phase shifts in each channel coherence time interval is challenging and burdensome, we employ the statistical \gls{csi}-based optimization strategy to alleviate this overhead. By treating the \gls{ris} phase shifts matrix as a constant over multiple channel coherence time intervals, we can reduce the computational complexity while maintaining an interesting performance. Based on an \gls{er} lower bound closed-form, the \gls{ee} optimization problem is formulated. Such a problem is non-convex and challenging to tackle due to the coupled variables. To circumvent such an obstacle, we explore the sequential optimization approach where the power allocation vector $\mathbf{p}$, the number of antennas $M$, and the \gls{ris} phase shifts $\mathbf{v}$ are {separated} and sequentially solved iteratively until convergence. With the help of the Lagrangian dual method, \gls{fp} techniques, {and supported by Lemma \ref{lemma:1},} insightful {compact} closed-form expressions {for each of the three} optimization variables are derived. Simulation results validate the effectiveness of the proposed method across different generalized channel scenarios, including \gls{nlos} ($K_1=0$) and partially \gls{los} ($K_1\neq0$) conditions. Our numerical results demonstrate an impressive performance of {the proposed} Statistical \gls{csi}-based {\gls{ee}} optimization {method}, achieving $\approx 92\%$ of the performance attained through {perfect} instantaneous \gls{csi}-based \gls{ee} optimization. This underscores its potential to significantly reduce power consumption, decrease the number of active antennas at the base station, and effectively incorporate \gls{ris} structure in \gls{m-mimo} communication setup with just statistical \gls{csi} knowledge.
\end{abstract} 
\acresetall
\smallskip
\noindent 

\begin{IEEEkeywords}
statistical CSI, massive multiple-input multiple-output (mMIMO), Reconfigurable intelligent surfaces (RIS), energy-efficiency (EE), zero-forcing (ZF), Lagrangian.
\end{IEEEkeywords}

\section{Introduction}
\lettrine[findent=2pt]{\textbf{R}}{}econfigurable intelligent surfaces is one of the most promising recent techniques that have the potential to integrate future communications systems (beyond fifth generation, 6G, etc.) due to its potential to improve the transmission by creating a ``virtual" reconfigurable communication link. Specifically, the \gls{ris} is composed of several individually scattering elements which are artificial meta-material structures that can reflect incident electromagnetic waves and can be configured to increase the signal level in a specific direction\cite{9122596,8910627}.

On the other hand, to enhance the \gls{se} and to serve a plurality of \gls{ue} using the same physical layer resources, \gls{m-mimo} is essential; however, this technology resorts to hundreds of antennas operating at the \gls{bs} \cite{marzetta2016fundamentals}, which can significantly increase the consumption. 

\gls{ris} and \gls{m-mimo} technologies hold great promise for the future of communication systems. As these technologies are set to play a vital role in the next-generation wireless networks, which will need to support exceptionally high data rates and accommodate a vast number of connected \gls{ue}s \cite{9349624,9397776}, there is a growing concern about their impact on energy consumption. Hence, it is imperative to focus on \gls{ee}, measured in bits-per-joule, as a pivotal performance metric to ensure the development of environmentally friendly and sustainable communication systems.

{An extensive number of works such as in \cite{He2022, RangLiu2022,9814527,9963962,8982186,10136805,9246254,9681803,EEdebbah, EEzeng,EEli,9548940,fotock2023energy, Zhi2022,STkangda,10167745,10072840} have studied the \gls{ris} deployment impact in \gls{m-mimo} systems under different objectives, such as \gls{ris}-assisted sensing \cite{He2022, RangLiu2022,9814527}, minimizing the total transmit power \cite{9963962}, maximizing the \gls{wsr} \cite{8982186}, maximizing the minimum rate \cite{10136805,9246254,9681803}, and maximizing the \gls{ee} \cite{EEdebbah, EEzeng, EEli,9548940,fotock2023energy}. All of the above-cited studies clearly demonstrate the immense potential and versatility of \gls{ris} within wireless systems.}

It is worth noting that all the previously mentioned studies assume the transmitter to access the instantaneous \gls{csi}. However, in practical applications, obtaining precise instantaneous \gls{csi} in \gls{ris}-enhanced \gls{m-mimo} systems can be quite challenging. This challenge becomes even more pronounced in scenarios with high mobility and short channel coherence times. To address this issue, in \cite{Zhi2022}, the authors derived a rigorous tight closed-form expression for the \gls{er} of a multi-user \gls{ris}-aided \gls{m-mimo} system employing \gls{zf} precoding over Rician fading. This analysis takes into account the optimization of \gls{ris} phase shifts based on derived statistical closed-form expression, offering a more practical and robust approach in scenarios where instantaneous \gls{csi} may not be readily available or reliable.  In \cite{STkangda}, the authors exploit the long-term statistical \gls{csi} to optimize the \gls{ris} phase shifts aiming to maximize the achievable rate with \gls{mrc}, where a genetic algorithm is deployed.  { In \cite{10167745}, the authors leverage the statistical \gls{csi} optimization in order to optimize the beamforming at \gls{bs}, passive beamforming at \gls{ris}, and the power of the \gls{ue}s, while considering maximum \gls{ue}s's exposure to electromagnetic fields.  In \cite{10072840}, the authors investigated the \gls{ee} optimization in a \gls{ris}-assisted \gls{m-mimo} scenario with one single user based on statistical-\gls{csi} optimization. The authors show the potential of optimizing the \gls{ris} phase shifts, \gls{bs} transmit beamforming, and the \gls{ue}s's power. }

To emphasize and highlight the significance and advantages of the statistical \gls{csi}-based optimization approach, Figure \ref{fig:TS} demonstrates the substantial time overhead associated with adopting instantaneous \gls{csi}-based optimization compared to statistical \gls{csi}-based optimization methods. With the statistical \gls{csi}-based optimization, there is no need to continuously optimize the \gls{ris} phase shifts during various subsequent channel coherence times. This reduction in overhead leads to enhanced payload data efficiency, making it a practical choice for challenging scenarios in \gls{ris}-assisted \gls{m-mimo} systems. Hence, the long-term statistical \gls{csi}-based optimization strategy could be more practical and feasible \cite{STkangda}.

{Furthermore, joint beamforming and phase shift optimization are extremely significant in order to improve the performance of wireless communication systems, particularly in \gls{ris}-aided \gls{m-mimo} systems. On the other hand, several works in recent literature deploy a \gls{zf}-based precoding scheme instead of directly optimizing the precoding policy.  Herein, we adopt \gls{zf} linear precoding motivated by two facts. First, we have opted for the well-established linear precoder \gls{zf} to circumvent the challenges posed by optimal precoding design in \gls{ris}-aided \gls{m-mimo} design. Second, our work is primarily motivated by the concept of statistical \gls{csi} optimization, which has the potential to significantly alleviate the overhead associated with the \gls{ris} phase shifts optimization procedure within each channel coherence time. We are interested in revealing the improvements attainable through instantaneous \gls{csi} optimization and emphasize the relevance of the statistical \gls{csi} optimization approach in \gls{ris}-aided \gls{m-mimo} design since such an approach can substantially reduce system complexity with only a marginal performance degradation.}

\vspace{2mm}
\noindent{\textbf{\textit{Novelty}}}: {To the best of our knowledge, the \gls{ee} optimization problem with statistical \gls{csi} under Rician fading for multi-user \gls{ris}-aided \gls{m-mimo} systems with \gls{zf} has not yet been thoroughly investigated. This knowledge gap has motivated us to embark on resolving this important challenge. In our work, unlike previous works, such as  \cite{EEdebbah, EEzeng, 9548940, fotock2023energy}, the \gls{ee} optimization problem is approached in a multi-user \gls{ris}-aided \gls{m-mimo} with statistical \gls{csi} where an alternating optimization strategy is proposed to find the number of \gls{bs} antennas ($M$), the \gls{ris} phase shifts ($\bf v$), and the transmit power allocation ($\bf p$) to maximize the \gls{ee}.
Moreover, our approach distinguishes itself from that presented in \cite{10167745} by not only considering statistical \gls{csi} but also incorporating \gls{zf} precoding and the optimization of active antennas at the \gls{bs}. Similarly, in \cite{10072840}, although the consideration of statistical \gls{csi} is similar, the crucial multi-user scenario was overlooked. Our research goes a step further by providing novel closed-form solutions for all optimization variables, thereby expanding the scope of active antennas and power allocation beyond the established boundaries of the traditional \gls{m-mimo} scenario for the domain of \gls{ris}-assisted configurations. 
Lastly, {but equally important}, in contrast to the conventional {\gls{ee}} solutions found in existing literature, which often rely on {a couple of optimization techniques, namely} \gls{sdr},  \gls{sfp}, or Manifold optimization, our study, {supported by Lemma \ref{lemma:1},} introduces a novel and straightforward closed-form solution 
for the quadratic programming problem involving the non-convex unit modulus constraint.}

\vspace{2mm}
\begin{figure}[!ht]
\vspace{-0.2cm}
\normalsize
\begin{center}
\includegraphics[width=1.105\columnwidth]{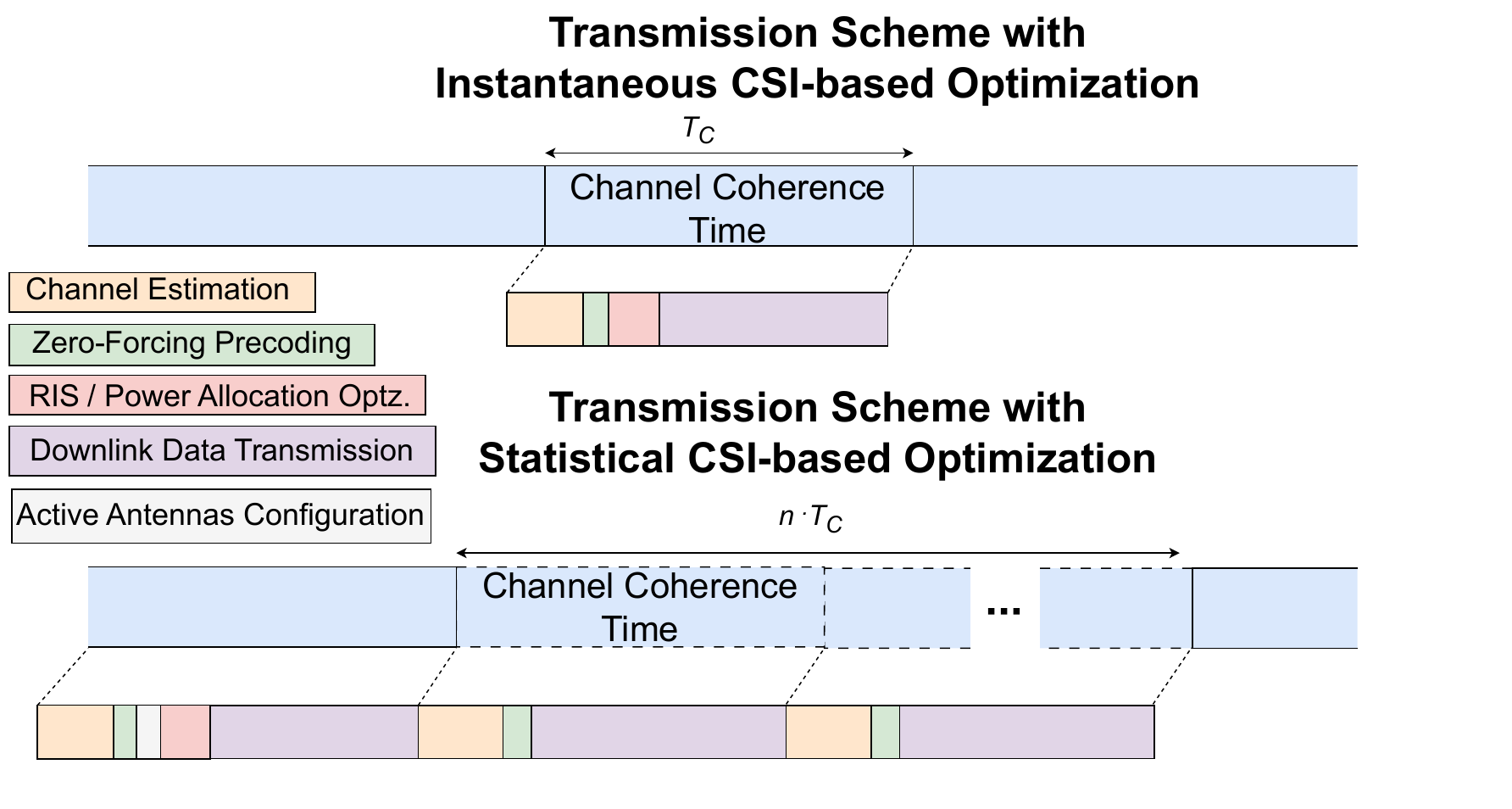}
\end{center}
\caption{ Illustration of the 
advantages by adopting Statistical \gls{csi}-based parameters optimization, where the \gls{ris} phase shifts vector ($\bf v$), transmit power allocation ($\bf p$), and the number of
BS antennas ($M$) remain fixed for the \gls{ue}s over multiple channel coherence times ($nT_c$); differently, for the Instantaneous \gls{csi}-based parameter optimization approach, where the channel estimation, precoding, \gls{ris} phase shifts, and transmit power must be updated at each $T_c$.}
\label{fig:TS}
\end{figure}

\vspace{2mm}
The main \textbf{\textit{contributions}} of this work are fourfold:

\begin{itemize} 
\item In this study, we have formulated a comprehensive non-convex optimization problem encompassing multiple key variables {in a multi-user \gls{ris}-aided \gls{m-mimo} system}, such as the \gls{ue}s's transmit power, the quantity of active base station antennas, and the phase shifts associated with the \gls{ris} elements. We also have addressed the constraints linked to the overall power budget, the assurance of \gls{qos}, and the necessity for unit module values since we consider only passive \gls{ris}. In addition, we have embraced the utilization of linear \gls{zf} precoding, wherein the feasibility of \gls{zf} is a pivotal consideration. Our approach extends further
by proposing an innovative solution that culminates in simple yet accurately derived closed-form expressions for all the variables at hand. {Specifically, for the phase shifts on the \gls{ris} optimization, the closed-form solution is an interesting alternative for the \gls{sdr}, \gls{sfp}, and Manifold optimization.}

\item Different from the previous works \cite{EEdebbah,EEzeng,EEli,9548940,fotock2023energy,10167745,10072840},  the formulated \glspl{ee} maximization problem relies upon only the statistical \gls{csi} estimates, which can be beneficial in short channel coherence time wireless scenarios; indeed, such an approach does not need precise instantaneous \gls{csi} estimates given optimizing the \gls{ris} phase shifts and power allocation, reducing the overhead imposed by the optimization procedure of these variables in every channel coherence time {and easing the practical implementation of the \gls{ris}-assisted system}. 
    
\item To deal with this problem, we apply the Lagrangian dual method with {\gls{fp} techniques to solve the non-convex problem}. {Specifically for the \gls{ris} phase shifts problem, we proposed an innovative \gls{fp}-based \gls{ris} phase shifts optimization, which enables us to find a closed-form expression for each element of \gls{ris}. To support our findings, the complexity of the proposed} antenna, power allocation, and \gls{ris} phase shifts optimization algorithm has been accurately investigated.

\item Extensive numerical results corroborate the effectiveness and efficiency of the proposed analytical \gls{ee} optimization method for \gls{ris}-aided \gls{m-mimo} systems with \gls{zf} operating under generalized Rician channels. {The proposed method has exhibited superior system performance in comparison to the gradient descent method. Moreover, its potential has been highlighted through promising results in contrast to the optimization based on instantaneous \gls{csi}-based optimization strategy.}
\end{itemize}

The remainder of this paper is organized as follows. Section \ref{sec:sys_model} describes the \gls{ris}-aided \gls{m-mimo} system model. The optimization problem and the overall proposed analytical optimization methods and algorithms are 
developed and discussed in Section \ref{sec:EEoptzproblem}. We present numerical results corroborating our findings in Section \ref{sec:simul}. Final remarks are offered in Section \ref{sec:concl}.

\textit{Notations:} Scalars, vectors, and matrices are denoted by the lower-case, bold-face lower-case, and bold-face upper-case letters, respectively; $\mathbb{C}^{A\times B}$ denotes the space of $A$ rows and $B$ columns complex matrices; $|\cdot|$ denote the absolute operator; $\diag(\cdot)$ denotes the diagonal operator; ${\rm Tr}(\cdot)$ is trace operator; $\left[x\right]^+$ denotes $\max(x,0)$; $[x]^a_b$ denotes $\min(\max(x,b),a)$; $[\cdot]^*$ denote the conjugate operator; $\angle[\cdot]$ denotes the phase of the complex argument; $j \triangleq \sqrt{-1}$; $\mathbb{E}\left[\cdot\right]$ is the expectation operator; $[\cdot]^T$ is the transpose operator; $[\cdot]^H$ is the conjugate transpose (Hermitian) operator; $\mathbf{I}_K$ stands for the $K\times K$ identity matrix; $\mathbf{X}^{-1}$ is the inverse of $X$; ${\rm det}(\cdot)$ is the determinant of a given matrix; ${\rm Re}(\cdot)$ is the real part of a complex value; $\mathcal{CN}(\mu,\sigma^2)$ defines a complex-valued Gaussian random variable with mean $\mu$ and variance $\sigma^2$; $x \sim \mathcal{U}[a,b]$ defines a uniform random variable from range of $a$ to $b$; $\left \lceil \cdot  \right \rceil$ denotes the ceiling function; $[\mathbf{X}]_{k,l}$ stands for the entry at $k$-th row and $l$-th column of matrix $\mathbf{X}$; $[\mathbf{X}]_{k,:}$ denotes the $k$-th row vector of matrix $\mathbf{X}$; $[\mathbf{X}]_{:,k}$ denotes the $k$-th column vector of matrix $\mathbf{X}$;   $[\mathbf{X}]_{a:b,c:d}$ represents the sub-matrix of $\mathbf{X}$ consisting of rows from $a$ to $b$ and columns from $c$ to $d$; The notation $\mathcal{O}(\cdot)$ denotes the computational complexity.

\section{System Model}\label{sec:sys_model}
We consider a single-cell multi-user \gls{ris}-aided \gls{m-mimo} system operating in downlink mode, where the BS is equipped with a \gls{ula} with $M_{\max}$ antennas, where $M$ antennas will be activated to serve $K$ single-antenna \gls{ue}s. {In this work, we have considered the context of far-field communication, which entails that each antenna has an equitable contribution to the system's overall performance.} The single \gls{ris} panel is equipped with $N$ elements and deployed to improve the network communication's reliability and coverage. Fig. \ref{fig:RIScell} illustrates the general \gls{ris} \gls{m-mimo} system and channel scenarios investigated.

Let us denote $\D \in \mathbb{C}^{M\times K}$ as the direct link communication channel between the \gls{ue}s and the \gls{bs}, given by
\begin{equation}
\D = \Tilde{\D} \boldsymbol{\alpha}_D^{1/2},
\end{equation}
where $\boldsymbol{\alpha}_D = \diag(\left[\alpha_{D,1},\dots,\alpha_{D, K}\right])$ and $\alpha_{D,k}$ denotes the large-scale fading 
term between the \gls{bs} and the $k$-th \gls{ue}, the matrix $\Tilde{\D} \in \mathbb{C}^{M\times K}$ is composed by independent and identically distributed (i.i.d) complex Gaussian random variables, whose mean is zero and variance is unit, $i.e.$ $d_{m,k} \sim \mathcal{CN}(0,1)$, $\forall k=1,\dots, K$ and $m=1,\dots, M$.

\begin{figure}[!ht]
\includegraphics[width=\columnwidth]{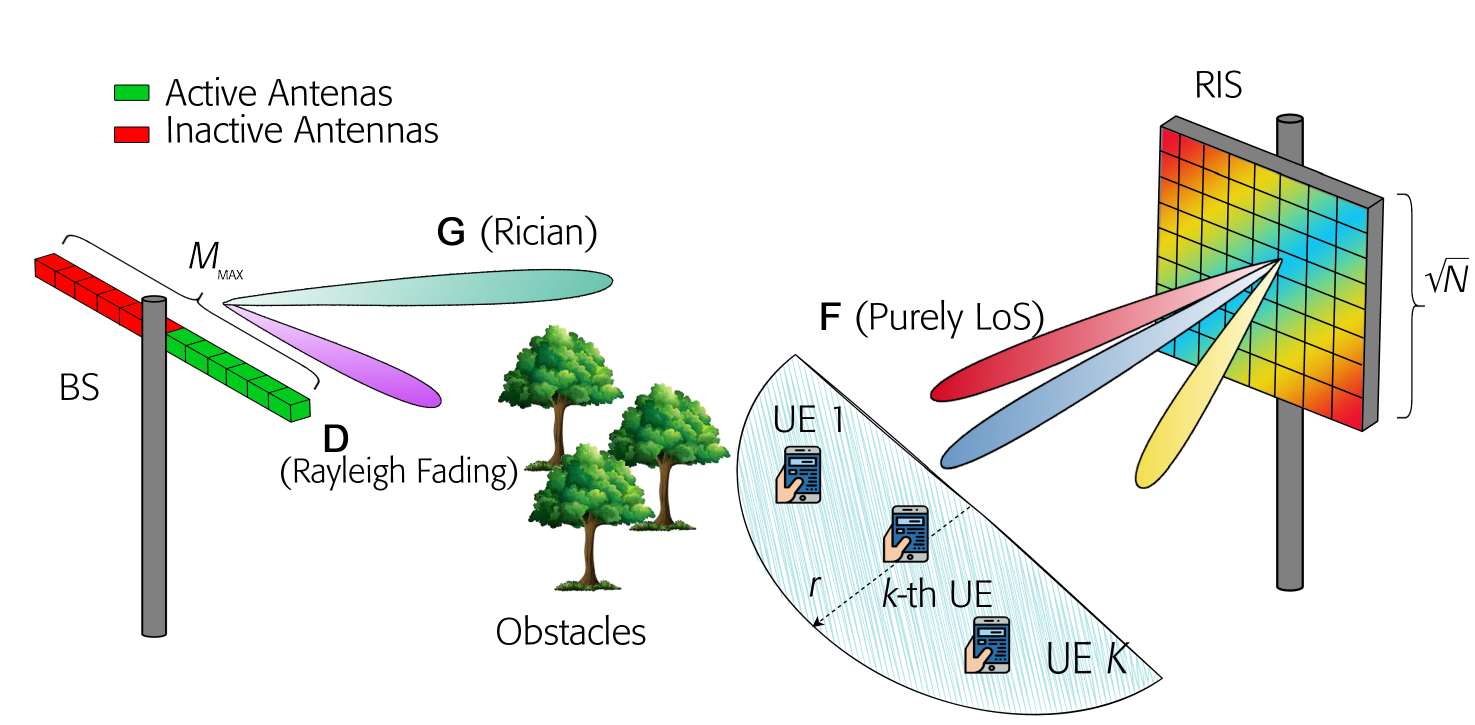}
\caption{{\gls{ris}-aided \gls{m-mimo} multi-user investigated scenarios. Particularly, this setup is illustrated with $N=64$ reflective elements at \gls{ris} and $M_{\max}=12$ \gls{bs} antennas; in this specific configuration, there are $M=6$ actives elements at \gls{bs}.}}
\label{fig:RIScell}
\end{figure}

We define the channel between the \gls{ris} and the \gls{bs} as $\mathbf{G} \in  \mathbb{C}^{M\times N}$  and the channel between the \gls{ue}s and the \gls{ris} as $\mathbf{F} \in \mathbb{C}^{N\times K}$. The channel matrix $\mathbf{G}$ is assumed to be a Rician fading channel and can be given as \cite{Zhi2022}
\begin{equation}
\mathbf{G} = \sqrt{\alpha_G\dfrac{1}{K_1 +1}} \mathbf{G}_{\rm{NLoS}} + \sqrt{\alpha_G\dfrac{K_1}{K_1+1}} \mathbf{G}_{\rm{LoS}}, 
\end{equation}
where $\alpha_G$ is the large-scale fading (path loss) between the RIS and the \gls{bs} and $K_1$ denotes the Rician factor. Besides, the fading channel matrix $\mathbf{F}$ is assumed to be purely \gls{los} and expressed as
\begin{equation}
\mathbf{F}  = \F_{\rm{LoS}} \boldsymbol{\alpha}_F^{1/2} =  [\sqrt{\alpha_{F,1}} \mathbf{f}^1_{\rm{LoS}}, \dots, \sqrt{\alpha_{F,K}} \mathbf{f}^K_{\rm{LoS}}],
\end{equation}
where $\boldsymbol{\alpha}_F = \diag([\alpha_{F,1}, \dots,\alpha_{F,K}])$ and $\alpha_{F,k}$ denotes the large-scale fading factor between the $k$-th \gls{ue} and the BS. Besides, the \gls{los} channel for \gls{ris} adopted herein is based on the two-dimensional \gls{uspa} and given as \cite{9935294} 
\begin{IEEEeqnarray}{rCl}
\mathbf{a}_{N}(\varphi,\phi) &=& \left[ 1, \dots, e^{j2\pi \frac{d}{\lambda} (x \sin\phi \sin\varphi + y\cos \phi)}, \nonumber\right. \\ 
&\dots&, \left. e^{j2\pi\frac{d}{\lambda} \left( \left( X-1 \right) \sin \phi \sin \varphi + \left( Y-1\right) \cos \phi \right)}  \right]^T,
\end{IEEEeqnarray}
where $0 \leq x,y < X$ with $X=Y\in{\mathbb{Z}_+}$ being the square side of the \gls{uspa} 
\cite{emilsquare}, \cite{near}, $d$ denotes the element spacing, $\lambda$ denotes the wavelength, $\varphi$ and $\phi$ are azimuth and elevation angles respectively. Let us denote $\varphi_{\rm{AoD}}^{bs}$ the \gls{aod} from the \gls{bs} towards the \gls{ris}, $\varphi_{{\rm AoD}}^{r,k}$ and $\phi_{\rm{AoD}}^{r,k}$ the \gls{aod} from \gls{ris} towards the $k$-th \gls{ue} and $\varphi_{{\rm AoA}}^r$ and $\phi_{{\rm AoA}}^r$ the \gls{aoa} at the \gls{ris} from the \gls{bs} respectively. Moreover, the \gls{los} channel for \gls{bs} is based on the \gls{ula}, given by \cite{8746155}
\begin{IEEEeqnarray}{rCl}
\mathbf{a}_{M}(\varphi) &=& \left[ 1, \dots, e^{j2\pi \frac{d}{\lambda} x \sin\varphi }, e^{j2\pi\frac{d}{\lambda} \left( \left( X-1 \right) \sin \varphi  \right)}  \right]^T. 
\end{IEEEeqnarray}

Under this condition, the \gls{los} channel matrix $\mathbf{G}_{{\rm LoS}}$ is expressed as 
\begin{equation}
\mathbf{G}_{{\rm LoS}} = \mathbf{a}_{M}(\varphi_{{\rm AoD}}^{bs}) \mathbf{a}_{N}(\varphi_{{\rm AoA}}^{r},\phi_{{\rm AoA}}^{r}) \triangleq  \mathbf{a}_M \mathbf{a}_N^H,
\end{equation}
and the \gls{los} component $\mathbf{f}_{{\rm LoS}}^k$ is 
\begin{equation}
\mathbf{f}_{{\rm LoS}}^k = \mathbf{a}_N(\varphi_{{\rm AoD}}^{r,k},\phi_{{\rm AoD}}^{r,k}).
\end{equation}

The phase shifts matrix of the \gls{ris} can be defined as 
\begin{equation}\label{eq:Phi}
\mathbf \Phi = \text{diag}(\mathbf{v}), \quad \mathbf{v} = [e^{-j\theta_1}, \dots, e^{-j\theta_N}]^H
\end{equation}
and $\theta_n$ is the phase shift of the $n$-th \gls{ris} element. In this way, the cascaded channel can be written as $\mathbf{H} = \mathbf{D} + \mathbf{G} \mathbf{\Phi} \mathbf{F}$.

The received signal $y_k$ at the $k$-th \gls{ue} is
\begin{equation}
    y_k = \sqrt{p_k} \mathbf{h}^H_k\mathbf{w}_k s_k + \sum_{j=1,j\neq k}^{K} \sqrt{p_{j}} \mathbf{h}_k^H \mathbf{w}_j s_j + n_k,
\end{equation}
where $\mathbf{h}_k$ and $\mathbf{w}_k$ is the $k$-th column of $\mathbf{H}$ and $\mathbf{W}$ respectively with $\mathbf{W}$ being the precoding matrix, $n_k$ is the i.i.d. additive white Gaussian noise with zero mean and unit variance, $n_k \sim \mathcal{CN}(0,1)$, $p_k \in \mathbf{p}$ is the downlink transmit power for the $k$-th \gls{ue}, where $\mathbf{p} = [p_1,\dots,p_K]^T$ denotes the set of transmit powers, and $s_k$ is the transmitted information symbol of the $k$-th \gls{ue} with $\mathbb{E}[|s_k|^2]=1$. In this work, we adopt the linear \gls{zf} precoding, 
given by the Moore–Penrose pseudoinverse $\mathbf{W} = \mathbf{H}(\mathbf{H}^H\mathbf{H})^{-1}$.

\section{Energy Efficiency Optimization Problem Formulation with Statistical CSI}\label{sec:EEoptzproblem}
Since channels might be fast time-varying in practical wireless communications scenarios, optimizing the $\boldsymbol{\Phi}$ matrix at each channel coherence time presents significant challenges, particularly because it necessitates the availability of accurate \gls{csi}; thus, utilizing the statistical \gls{csi} to optimize the \gls{ee} in \gls{ris}-aided \gls{m-mimo} can be a cost-efficient alternative way. Therefore, we formulate and investigate the \gls{ee} optimization problem in this section by exploiting the statistical \gls{csi} estimates. For this reason, we should adopt the achievable \gls{er} with \gls{zf} precoder in \gls{ris}-aided \gls{m-mimo} systems, which can be written as \cite[Eq. (18)]{ngommimo}:
\begin{equation} \label{eq:rate}
R_k = \mathbb{E}\left\{ \log_2\left(1+ \frac{p_k}{\sigma^2 \left[ (\mathbf{H}^H\mathbf{H})^{-1}\right]_{k,k}}\right)\right\}.
\end{equation}

In \cite{Zhi2022}, Kangda Zhi, $et. \; al.$, derived a closed-form lower bound expression for the \gls{er} as given in Eq. \eqref{eq:rate}, in a \gls{ris}-aided \gls{m-mimo} system with \gls{zf}. This expression can be represented as \cite[Eq. (22)]{Zhi2022}
\begin{equation} \label{eq:rateLB}
R_k \geq \widetilde{R}_k =  \log_2\left(1 + \frac{p_k(M-K) \mathbf{v}^H \mathbf{B} \mathbf{v} }{\sigma^2(K_1 + 1) \mathbf{v}^H \mathbf{A}_k \mathbf{v}}    \right), 
\end{equation}
where the constants $\mathbf{B}$ and $\mathbf{A}_k$ $\forall k\in \{1,\dots,K\}$ are given as
\begin{equation} \label{eq:B}
    \mathbf{B} \triangleq \frac{1}{N} \mathbf{I}_N + \alpha_G K_1 \; \diag(\mathbf{a}_N^H) \F \mathbf{\Lambda}^{-1} \F^H \diag(\mathbf{a}_N)
\end{equation}

\begin{equation} \label{eq:Ak}
    \mathbf{A}_k \triangleq \left[ \mathbf{\Lambda}^{-1}\right]_{k,k} \mathbf{B} - K_1 \alpha_G \mathbf{s}_k \mathbf{s}_k^H,
\end{equation}
where the variables $\mathbf{\Lambda} \triangleq \alpha_G \F^H \F + (K_1+1)\diag(\boldsymbol{\alpha}_D),$ and $\mathbf{s}_k^H \triangleq \left[ \mathbf{\Lambda^{-1}} \F^H \diag(\mathbf{a}_N)\right]_{k,:}$ are computed through of the path-loss and \gls{los} component 

Thus, the {\it maximum \gls{ee} optimization} problem regarding the interest variables is described as follows.
\begin{IEEEeqnarray}{rCl} \label{eq:P1}
 & \Pn: \, \underset{{M,\mathbf{p},\mathbf{v}} }{{\rm maximize}} &\;\; \etaee =   
\frac{ \displaystyle \sum_{k=1}^K \log_2\left(1 + \frac{p_k(M-K) \mathbf{v}^H \mathbf{B} \mathbf{v} }{\sigma^2(K_1 + 1) \mathbf{v}^H \mathbf{A}_k \mathbf{v}}    \right)}{\varrho \displaystyle \sum_{k=1}^K p_k + P_{\rm{FIX}} + M P_{\rm{BS}} + N P_{\rm{RIS}}} 
    \nonumber \\\\
& {\rm subject\;to} & \hspace{4mm}  \displaystyle \sum_{k=1}^K p_k \leq \ptx, \IEEEyessubnumber \label{c:1}
    \\
&& \hspace{4mm} \widetilde{R}_k \geq R_{\min}, \;\;\;\forall k=1,\dots,K, \IEEEyessubnumber \label{c:2}
    \\
&& \hspace{4mm} M > K,  \IEEEyessubnumber \label{c:3}
    \\
&& \hspace{4mm}  |e^{j \theta_n }| = 1, \;\;\; \forall n=1,\dots,N, \IEEEyessubnumber \label{c:4}
\stepcounter{problem}
\end{IEEEeqnarray}
where $\varrho$ is the \gls{pa} inefficiency; $\pbs$ is the constant \gls{rf} chain circuit power consumption per transmit antenna; $\pris$ is a fixed power value necessary to activate one reflecting element on \gls{ris}; $P_{{\rm FIX}}$ is the fixed consumed power at the \gls{bs} \cite{EEdebbah,9497709}. Constraint \eqref{c:1} is the total \gls{rf} transmit power constraint at \gls{bs} with $\ptx$ being the {\it power budget} available at \gls{bs}, $i.e.$, the maximum power available; constraint \eqref{c:2} guarantees the minimum achievable downlink rate for all \gls{ue}s, where $R_{\min}$ is a pre-defined parameter which depends on \gls{qos} requirements; constraint \eqref{c:3} guarantees the applicability of the \gls{zf} in the \gls{er} lower bound equation \footnote{In the practice, it is important to note that $M\geq K$ for \gls{zf} feasibility, however, when considering the \gls{er} defined in \eqref{eq:rateLB}, the number of antennas should exceed than the number of \gls{ue}s, $i.e.$, $M>K$, as the rate becomes null when the equality condition is met.}, and constraint \eqref{c:4} accounts for the fact that each \gls{ris} reflecting element can only redirect, without amplifying the incoming signal, $i.e.$, the \gls{ris} is assumed to operate exclusively only in passive mode.

The \textbf{\textit{energy consumption}} model builds upon significant contributions from established works in the literature, particularly
\cite{EEdebbah, 9497709}. Our primary emphasis in this work is to demonstrate the efficiency and effectiveness of the developed optimization methodology in yielding \gls{ris}-aided \gls{m-mimo} \gls{ee} performance enhancements at the cost of an affordable computational complexity. It is worth noting that our chosen energy consumption model inherently accounts key factors such as individual antenna element consumption ($\pbs$) and \gls{ris} consumption per element ($\pris$). This consideration is sufficient in attaining our set of objectives.

We can see that the objective function of the problem given in Eq. \eqref{eq:P1} is a concave-linear fraction whose numerator and denominator are concave and linear functions w.r.t. the number of active antennas $M$ and power allocation $\mathbf{p}$ respectively. Furthermore, one can notice that the constraint \eqref{c:2} is a non-linear constraint; and the constraint \eqref{c:4} is non-convex; thus, the formulated problem in \eqref{eq:P1} is an intricate non-convex \gls{nlp}, hence, difficult to solve.  

Due to the coupling of variables, it is hard to obtain the globally optimal solution of \eqref{eq:P1}. Therefore, we proposed an iterative procedure 
to solve problem \hyperref[eq:P1]{$\mathcal{P}_1$} sub-optimally and with low complexity. For this purpose, we leverage the classical {\it alternating optimization} strategy, which optimizes one variable while the others are kept fixed. The proposed alternating algorithm contains basically two fundamental steps, which are run sequentially. In the first step, we optimize the \gls{ris} phase shifts $\mathbf{v}$ for a fixed number of active antennas $M$ and power allocation $\mathbf{p}$; in the second step, we optimize the number of active antennas $M$ and power allocation $\mathbf{p}$ with given \gls{ris} phase shifts $\mathbf{v}$. 

\subsection{Phase Optimization} \label{ss:Phase}

In this subsection, we deal with the \gls{ris} phase shifts optimization problem, where we design a new low-complexity algorithm 
based on an analytical approach. 
Specifically, we first apply the 
\gls{ldt} procedure \cite{8310563,8982186} to translate our sum-of-logarithm original problem into an equivalent sum-of-ratios problem. In addition, we utilized the 
strategy to deal with the sum-of-ratio proposed in \cite{Jong2012AnEG} to reach a closed-form expression for $\theta_n$ $\forall n \in \{1,\dots,N\}$.

\subsubsection{Lagrangian Dual Transform} \label{subsubsec:LDT}

From the objective function of \eqref{eq:P1}, we observe that when $\mathbf{p}$ and $M$ are fixed, the \gls{ee} maximization is equivalent to the sum-rate maximization; therefore, the optimal $\mathbf{v}$ is the one that maximizes the numerator of \eqref{eq:P1}. One can notice that due to the sum-of-logarithm, the closed-form expression is challenging or even impracticable to obtain directly; to address this issue, we initially employed the {\it Lagrangian Dual Transform} proposed in 
\cite{8310563} to decouple the argument of the logarithm function argument, enabling us to derive the equivalent optimization problem, given as:
\begin{IEEEeqnarray}{rCl} \label{eq:P2}
 \Pn: \;\;\; && \underset{\mathbf{v},\boldsymbol{\gamma}}{\rm maximize} \quad \displaystyle   \sum_{k=1}^K  \left(  \vphantom{\frac{(1+\gamma_k) p_k(M-K) \mathbf{v}^H \mathbf{B} \mathbf{v}}{\mathbf{v}^H \left( \sigma^2(K_1 + 1) \mathbf{A}_k   + p_k(M-K)  \mathbf{B} \right )\mathbf{v} } }  \log_2(1 + \gamma_k)  -\gamma_k  \right.
 \\
 && \left.  + \; \frac{(1+\gamma_k) p_k(M-K) \mathbf{v}^H \mathbf{B} \mathbf{v}}{\mathbf{v}^H \left( \sigma^2(K_1 + 1) \mathbf{A}_k   + p_k(M-K)  \mathbf{B} \right )\mathbf{v} } \right), \nonumber
\\ \nonumber
\\
&& {\rm subject\; to} \quad  \gamma_k \geq 0, \quad \forall k=1,\dots,K, \nonumber
\\
&& \qquad \qquad \quad \; |e^{j \theta_n }| = 1, \;\;\; \forall n=1,\dots,N \nonumber.
\stepcounter{problem}
\end{IEEEeqnarray}
where $\boldsymbol{\gamma} = [\gamma_1,\dots,\gamma_K]^T$ is an auxiliary variable 
introduced for each argument of the logarithm function $\frac{p_k(M-K) \mathbf{v}^H \mathbf{B} \mathbf{v} }{\sigma^2(K_1 + 1) \mathbf{v}^H \mathbf{A}_k \mathbf{v}}$in the numerator of Eq. \eqref{eq:P1}. Here, the idea is optimize $\boldsymbol{\gamma}$ and $\mathbf{v}$ in an alternating, interative way, $i.e.$, we optimize $\boldsymbol{\gamma}$ for given $\mathbf{v}$, while $\mathbf{v}$ is updated for values of $\boldsymbol{\gamma}$ at their last updated. The optimal values of $\boldsymbol{\gamma}$ can be straightly obtained in closed-form according to the \gls{kkt} conditions applied to Eq. \eqref{eq:P2}, resulting: 
\begin{equation} \label{eq:gammaopt}
\overset{\hspace{-1.3mm}\customstar}{\gamma_k} = \left[ \left( \frac{\sigma^2(K_1 + 1) \mathbf{v}^H \mathbf{A}_k \mathbf{v} + p_k(M-K) \mathbf{v}^H \mathbf{B} \mathbf{v} }{\log(2)\sigma^2(K_1 + 1) \mathbf{v}^H \mathbf{A}_k \mathbf{v}} \right) -1  \right]^+.
\end{equation}

Given $\overset{\hspace{0.4mm}\customstar}{\boldsymbol{\gamma}}$, one may focus on the following fractional problem 
\begin{equation} \label{eq:sum-of-ratio}
    \mathcal{P}_2^{'}:  \, \underset{\mathbf{v}}{\rm maximize} \, \sum_{k=1}^K \frac{(1+\overset{\hspace{-1.3mm}\customstar}{\gamma_k}) p_k(M-K) \mathbf{v}^H \mathbf{B} \mathbf{v}}{\mathbf{v}^H \left( \sigma^2(K_1 + 1) \mathbf{A}_k   + p_k(M-K)  \mathbf{B} \right )\mathbf{v} }.
\end{equation}

\subsubsection{Fractional Problem Sum-of-Ratio}\label{sec:fracSoR}
{As the problem $\mathcal{P}_2^{'}$ given by Eq. \eqref{eq:sum-of-ratio} involves} a sum-of-ratios, classical transformations for {single-ratio} fractional programming problems, {such} as \textit{Charnes-Cooper Transform} and \textit{Dinkelbach's Transform} {cannot be easily generalized} to multiple-ratio problems \cite{8314727}. {Furthermore, while} the \textit{Quadract Transform} proposed in \cite{8314727} {performs well} in many multiple-ratio problems, {particularly in the problem of phase shifts optimization, it can pose challenges in achieving our primary objective, which is} obtaining a closed-form expression {for each element of $\mathbf{v}$}, as it introduces a square root in the optimization variable. To circumvent this problem, herein we invoke the strategy proposed in \cite{Jong2012AnEG}. 

The key idea {for the method proposed in \cite{Jong2012AnEG}}, is the introduction of auxiliary variables $\boldsymbol{\beta} = [\beta_1,\dots,\beta_K]^T$, {which enables an equivalent representation of the} problem represented by Eq. \eqref{eq:sum-of-ratio}, as the following

\begin{IEEEeqnarray}{rCl} \label{eq:p3'}
\Pn 
: \; && \underset{\mathbf{v},\boldsymbol{\beta}}{\rm maximize}  \sum_{k=1}^K u_k\left( \vphantom{\sum}(1+\overset{\hspace{-1.3mm}\customstar}{\gamma_k})p_k(M-K)\mathbf{v}^H\mathbf{B} \mathbf{v} \right. \nonumber
    \\
    && \left. - \;
    \beta_k \mathbf{v}^H \left( \sigma^2(K_1 + 1) \mathbf{A}_k   + p_k(M-K)  \mathbf{B} \right )\mathbf{v} \vphantom{\sum} \right)  
    \\
    \nonumber
    \\
    && {\rm subject\; to} \quad  \beta_k \geq 0, \quad \forall k=1,\dots,K, \nonumber
\\
&& \qquad \qquad \quad \; |e^{j \theta_n }| = 1, \;\;\; \forall n=1,\dots,N \nonumber.
\end{IEEEeqnarray}
where 
\begin{equation} \label{eq:u}
    u_k = \frac{1}{\mathbf{v}^H \left( \sigma^2(K_1 + 1) \mathbf{A}_k   + p_k(M-K)  \mathbf{B} \right )\mathbf{v}},
\end{equation}
is defined as the denominator of the objective function given in Eq. \eqref{eq:sum-of-ratio}, $\forall k=1,\dots, K$, and is set constant being updated in each interaction after $\mathbf{v}$ is optimized. Besides, according to \cite{Jong2012AnEG} the optimal values of $\boldsymbol{\beta}$ can be obtained as:

\begin{equation} \label{eq:betaopt}
\overset{\hspace{-0.9mm}\customstar}{\beta_k} = \frac{(1+\overset{\hspace{-1.3mm}\customstar}{\gamma_k}) p_k(M-K) \mathbf{v}^H \mathbf{B} \mathbf{v}}{\mathbf{v}^H \left( \sigma^2(K_1 + 1) \mathbf{A}_k   + p_k(M-K)  \mathbf{B} \right )\mathbf{v} }.
\end{equation}

Notice that for given $\overset{\hspace{0.5mm}\customstar}{\boldsymbol{\beta}}$, we can rewrite Eq. \eqref{eq:p3'} as

\begin{IEEEeqnarray}{rCl} \label{eq:P3}\label{eq:quadraticproblem}
    \Pn^{'}: \quad && \underset{\mathbf{v}}{\rm maximize} \quad  \mathbf{v}^H \mathbf{C} \mathbf{v},
    \\
    && {\rm subject\; to}  \qquad  |e^{j \theta_n }| = 1, \;\;\; \forall n=1,\dots,N \nonumber
    \stepcounter{problem}
\end{IEEEeqnarray}

where $\mathbf{C} \in \mathbb{C}^{N \times N}$ is a Hermitian matrix defined as

\begin{IEEEeqnarray}{rCl}  \label{eq:C}
 \mathbf{C} = && \sum_{k=1}^K u_k\left( \vphantom{\sum} (1+\overset{\hspace{-1.3mm}\customstar}{\gamma_k})p_k(M-K)\mathbf{B}  \right.
\\
&& \left. - \overset{\hspace{-0.9mm}\customstar}{\beta_k}  \left( \sigma^2(K_1 + 1) \mathbf{A}_k   + p_k(M-K)  \mathbf{B} \right ) \vphantom{\sum} \right).\nonumber 
\end{IEEEeqnarray}

\begin{figure*}[!b] 
\hrulefill
\normalsize
\begin{IEEEeqnarray}{rCl}\label{eq:lagrangian}
\mathcal{L}\left(M,\mathbf{\widetilde{R
}},\boldsymbol{\mu}, \vartheta \right) =  &\sum_{k=1}^K& \widetilde{R}_k
 - \etaee\left(\varrho \displaystyle \sum_{k=1}^K \frac{(2^{\widetilde{R}_k}-1)\sigma^2(K_1+1)\mathbf{v}^H\mathbf{A}_k\mathbf{v}}{(M-K)\mathbf{v}^H\mathbf{B}\mathbf{v}} + P_{\rm{FIX}} +  M P_{\rm{BS}} +  N P_{\rm{RIS}}\right) + \nonumber \\
 && \displaystyle + \sum_{k=1}^K \mu_k \left(\widetilde{R}_k -  R_{\min} \right)  
+\vartheta \left(\ptx -  \sum_{k=1}^K \frac{(2^{\widetilde{R}_k}-1)\sigma^2(K_1+1)\mathbf{v}^H\mathbf{A}_k\mathbf{v}}{(M-K)\mathbf{v}^H\mathbf{B}\mathbf{v}}\right). 
\end{IEEEeqnarray}
\vspace*{4pt}
\end{figure*}

The problem \hyperref[eq:P3]{$\mathcal{P}_3$}  
is known in the literature, and initially, the first proposed solution has been the \gls{sdr} \cite{5447068}. {Nevertheless, the performance of the \gls{sdr} algorithm is highly reliant on Gaussian randomization, particularly when the rank of the matrix ($\mathbf{v}\mathbf{v}^H$) exceeds one. On the one hand, \gls{sdr} can introduce additional implementation complexity, primarily attributed to the time-consuming nature of solving the \gls{sdr} problem, especially as the matrix dimensions increase. Furthermore, achieving an improved solution often necessitates a substantial number of randomization. Therefore, this motivates us to develop two different strategies:
a) Analytical approach, where we derived innovative closed-form expressions, and 
b) \gls{sfp} {method}, also known as the \gls{mm} strategy.}

{In pursuit of developing a low-complexity algorithm, we have successfully derived a novel closed-form expression for the phase shifts of \gls{ris}. This achievement is encapsulated in the following lemma:}

\begin{lemma} \label{lemma:1}
    Let the optimization problem given in Eq. \eqref{eq:quadraticproblem}, when $v_{\ell}=e^{j\theta_\ell}$ is fixed, $\forall \ell \in \{1,\dots,N\} \setminus n$, the optimal solution of $\overset{\hspace{-2mm}\customstar}{[\mathbf{v}]_n}$ can be given in closed-form expression by

    \begin{IEEEeqnarray}{rCl} \label{eq:thetaopt}
    \overset{\hspace{-1.4mm}\customstar}{\theta_n} &=& \angle \left[ {\rm Tr}\left( \left(\left[\mathbf{C}\right]_{1:(n-1),n}\right)^H \left[\mathbf{v}\right]_{1:(n-1)} \right) \right.\nonumber  \\
     &&\qquad \qquad \left. + \; {\rm Tr}\left( \left[\mathbf{v}\right]_{(n+1):N} \left[\mathbf{C}\right]_{n,(n+1):N}  \right)  \vphantom{\left(\left[\mathbf{C}\right]_{1:(n-1),n}\right)^H}  \right  ], \nonumber \\
    && \qquad \qquad \qquad \qquad \forall n \in \{1,\dots,N\}
\end{IEEEeqnarray}
\end{lemma}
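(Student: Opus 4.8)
The plan is to solve $\mathcal{P}_3^{'}$ by \emph{coordinate ascent}: holding every $v_\ell = e^{j\theta_\ell}$ fixed for $\ell \neq n$, I maximize the real quadratic form $\mathbf{v}^H\mathbf{C}\mathbf{v}$ as a function of the single remaining phase $\theta_n$. First I would expand the form as the double sum $\mathbf{v}^H\mathbf{C}\mathbf{v} = \sum_{i,j} [\mathbf{v}]_i^{*}\,[\mathbf{C}]_{i,j}\,[\mathbf{v}]_j$ and split it according to whether the index $n$ appears. This isolates four pieces: the diagonal term $[\mathbf{C}]_{n,n}$, which is constant because $|v_n|=1$ and $[\mathbf{C}]_{n,n}$ is real (as $\mathbf{C}$ is Hermitian); a double sum over $i,j \neq n$ that does not involve $\theta_n$; and the two ``cross'' contributions gathering row $n$ and column $n$ of $\mathbf{C}$.

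The key step is to merge these two cross terms. Setting $c_n \triangleq \sum_{j \neq n} [\mathbf{C}]_{n,j}\,v_j$, the row-$n$ contribution is $v_n^{*} c_n$. For the column-$n$ contribution I invoke the Hermitian identity $[\mathbf{C}]_{i,n} = [\mathbf{C}]_{n,i}^{*}$, which gives $\sum_{i \neq n} v_i^{*}\,[\mathbf{C}]_{i,n} = c_n^{*}$, so that term equals $v_n c_n^{*}$. Consequently all $\theta_n$-dependent terms collapse to $v_n^{*} c_n + v_n c_n^{*} = 2\,{\rm Re}(v_n^{*} c_n) = 2|c_n|\cos(\angle c_n - \theta_n)$. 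This single-variable cosine is maximized over the circle at $\theta_n = \angle c_n$, regardless of the magnitude $|c_n|$, yielding $\overset{\hspace{-1.4mm}\customstar}{\theta_n} = \angle\!\big(\textstyle\sum_{j \neq n}[\mathbf{C}]_{n,j}v_j\big)$.

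It then remains to check, by direct bookkeeping, that the block-trace expression in \eqref{eq:thetaopt} is exactly $c_n$. I would evaluate the two traces separately. The first, ${\rm Tr}\big(([\mathbf{C}]_{1:(n-1),n})^{H}[\mathbf{v}]_{1:(n-1)}\big)$, is a row-times-column inner product; since $([\mathbf{C}]_{1:(n-1),n})^{H}$ has entries $[\mathbf{C}]_{n,i}$ (Hermitian symmetry once more), it equals $\sum_{i=1}^{n-1}[\mathbf{C}]_{n,i}v_i$. The second, ${\rm Tr}\big([\mathbf{v}]_{(n+1):N}[\mathbf{C}]_{n,(n+1):N}\big)$, is the trace of an outer product, equal to $\sum_{j=n+1}^{N}v_j[\mathbf{C}]_{n,j}$. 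Their sum is precisely $\sum_{j \neq n}[\mathbf{C}]_{n,j}v_j = c_n$, which closes the argument.

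I do not expect a substantive analytical difficulty: the per-element subproblem reduces to maximizing a shifted cosine over a circle. The only parts demanding care are the repeated and correct application of the Hermitian symmetry of $\mathbf{C}$ — both to combine the row and column cross terms and to re-index the conjugate-transposed block $([\mathbf{C}]_{1:(n-1),n})^{H}$ — and the index bookkeeping that identifies the compact trace notation with the scalar $c_n$.
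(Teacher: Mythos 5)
Your proposal is correct and follows essentially the same route as the paper's proof: isolate the $\theta_n$-dependent cross terms of the quadratic form, use the Hermitian symmetry of $\mathbf{C}$ to collapse them into $2\,{\rm Re}\big(e^{-j\theta_n}c_n\big)$ with $c_n=\sum_{j\neq n}[\mathbf{C}]_{n,j}v_j$, and align the phase. The only difference is cosmetic — you expand $\mathbf{v}^H\mathbf{C}\mathbf{v}$ as a scalar double sum, whereas the paper lifts to ${\rm Tr}(\mathbf{C}\mathbf{V})$ with $\mathbf{V}=\mathbf{v}\mathbf{v}^H$ and partitions $\mathbf{C}$ into blocks — and your final bookkeeping correctly identifies the two traces in \eqref{eq:thetaopt} with $c_n$.
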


\begin{proof}
The proof is Available in Appendix \ref{app:lemma}.
\end{proof}

For the \gls{sfp} method, we employed the strategy described in \cite[Lemma 2]{EEdebbah}. The proposed {\gls{fp}-based \gls{ris} phase shift optimization}, based on statistical-\gls{csi} for \gls{ris}-aided \gls{m-mimo}, is summarized in Algorithm \ref{alg:phase-optz}. 
The \gls{ris} phase shifts $\mathbf{v}$, are optimized when the \gls{ue}'s power allocation $\mathbf{p}$ and the number of transmit antennas at \gls{bs} $M$ are fixed. 

\begin{algorithm}  
\caption{\gls{fp}-based \gls{ris} Phase shifts Optimization}
\begin{algorithmic}
\State \textbf{Input:} $M$, $\mathbf{p}$, $\mathbf{v}$, $\sigma^2$, $K_1$, $\mathbf{B}$, $\mathbf{A}_k$, $\forall k$ $\in \{1,\dots,K\}$
\Repeat
\State \textbf{Step 1:} Update $\boldsymbol{\gamma}$ by \eqref{eq:gammaopt};
\Repeat
\State \textbf{Step 2:}  Compute $\mathbf{u}$ by \eqref{eq:u}; 
\State \textbf{Step 3:}  Update $\boldsymbol{\beta}$ by \eqref{eq:betaopt}; 
\State \textbf{Step 4:} Compute  $\mathbf{C}$ by \eqref{eq:C};
\State  Set $\ell = 1$;
\Repeat
\State \vspace{-0.35cm}
\State \textbf{Step 5:}
\vspace{0.05cm}
\State  {\it 1) Analytical Approach:}
\State \hspace{0.2cm}  $[\mathbf{v}]_{n}^{(\ell)} = e^{j\overset{\hspace{-1.3mm}\customstar}{\theta_n}}$ based on \eqref{eq:thetaopt}, $ n=1,\dots,N$;
\State \vspace{-0.2cm}
\State  {\it 2) SFP Approach:}
\State \hspace{0.2cm}  Set $\lambda_{\min}$ as the minimum eigenvalue of $\mathbf{C}$;
\State \hspace{0.2cm}  $\mathbf{v}^{(\ell)} = e^{j\angle(\lambda_{\min}\mathbf{I}_N-\mathbf{C})\mathbf{v}^{(\ell-1)}}$;
\Until $|| \mathbf{v}^{(\ell)} - \mathbf{v}^{(\ell-1)} || < \epsilon$
\Until the objective function in \eqref{eq:sum-of-ratio} converge
\Until the objective function in \eqref{eq:P2} converge
\State \textbf{Output:} $\overset{\hspace{0mm}\customstar}{{\bf v}} = [e^{-j\overset{\hspace{-1.1mm}\customstar}{\theta_1}},\dots,e^{-j\overset{\hspace{-1.75mm}\customstar}{\theta_N}}]^H$
\end{algorithmic}
\label{alg:phase-optz}
\end{algorithm}

\subsection{Active Antennas  and Power Allocation Optimization}

The formulated original problem \hyperref[eq:P1]{$\mathcal{P}_1$} is a 
fractional programming problem, whereas Dinkelbach's transform is a classical method that can deal with this kind of problem. By invoking the Dilkelbach's transform, problem \hyperref[eq:P1]{$\mathcal{P}_1$} can be transformed to
\begin{IEEEeqnarray}{rCl} \label{eq:P4}
 \Pn: && \; \underset{M,\mathbf{p}}{\rm maximize} \displaystyle \sum_{k=1}^K \log_2 \left(1 + \frac{p_k(M-K) \mathbf{v}^H \mathbf{B} \mathbf{v} }{\sigma^2(K_1 + 1) \mathbf{v}^H \mathbf{A}_k \mathbf{v}}    \right) 
    \\
&& -\eta_{{\rm EE}} \left( \varrho \displaystyle \sum_{k=1}^K p_k + P_{\rm{FIX}} + M P_{\rm{BS}} + N P_{\rm{RIS}} \right).  \nonumber
\\
&&{\rm subject\;to} \quad \eqref{c:1},  \eqref{c:2} \nonumber.
\stepcounter{problem}
\end{IEEEeqnarray}

Notice that the existence of logarithmic functions imposes a serious difficulty in deriving a closed-form solution for $M$. Intending to propose a low complexity algorithm for optimization variable $M$, we proceed similarly to \cite{8457308}. Hence, by recalling Eq. \eqref{eq:rateLB} and after some mathematical manipulations we can define
\begin{equation}\label{eq:pk}
p_k = \frac{(2^{\widetilde{R}_k}-1)\sigma^2(K_1+1)\mathbf{v}\mathbf{A}_k\mathbf{v}}{(M-K)\mathbf{v}\mathbf{B}\mathbf{v}}
\end{equation}

Substituting \eqref{eq:rateLB} and \eqref{eq:pk} into \hyperref[eq:P4]{$\mathcal{P}_4$} we can further convert this problem to the following equivalent problem:
\begin{IEEEeqnarray}{rCl} \label{eq:P5}
 \Pn: && \; \underset{M,\widetilde{\mathbf{R}}}{\rm maximize} \displaystyle \sum_{k=1}^K \widetilde{R}_k -\etaee\left( \vphantom{\displaystyle \sum_{k=1}^K \frac{(2^{\widetilde{R}_k}-1)\sigma^2(K_1+1)\mathbf{v}^H\mathbf{A}_k\mathbf{v}}{(M-K)\mathbf{v}^H\mathbf{B}\mathbf{v}} } P_{\rm{FIX}} + M P_{\rm{BS}} + N P_{\rm{RIS}}  \right.
  \nonumber  \\
&& \left.  \varrho \displaystyle \sum_{k=1}^K \frac{(2^{\widetilde{R}_k}-1)\sigma^2(K_1+1)\mathbf{v}^H\mathbf{A}_k\mathbf{v}}{(M-K)\mathbf{v}^H\mathbf{B}\mathbf{v}}  \right). \\
&& {\rm subject\;to}  \; \displaystyle \sum_{k=1}^K \frac{(2^{\widetilde{R}_k}-1)\sigma^2(K_1+1)\mathbf{v}^H\mathbf{A}_k\mathbf{v}}{(M-K)\mathbf{v}^H\mathbf{B}\mathbf{v}}  \leq \ptx, \nonumber\\ \IEEEyessubnumber \\
&& \hspace{20mm} \widetilde{R}_k \geq R_{\min}, \;\;\;\forall k=1,\dots,K \IEEEyessubnumber \hspace{12.5mm} \eqref{c:2} \nonumber
\end{IEEEeqnarray}

Since  $\mathbf{v}^H\mathbf{A}_k\mathbf{v} > 0$ and $\mathbf{v}^H\mathbf{B}\mathbf{v}>0$, \hyperref[eq:P5]{$\mathcal{P}_5$} is convex, one can utilize the Lagrangian Dual method \cite{boyd} to solve this problem and reach near-optimal low-complexity solution regarding the original problem \hyperref[eq:P1]{$\mathcal{P}_1$}.  

The Lagrangian Dual problem can be modeled by \cite{boyd,tang}
\begin{IEEEeqnarray}{rCl}
&\underset{\boldsymbol{\mu},\vartheta}{{\rm minimize}} 
\quad  \underset{M,\mathbf{\widetilde{R}}}{\rm{maximize}} 
\quad
\mathcal{L}\left(M,\mathbf{\widetilde{R}}, \boldsymbol{\mu}, \vartheta \right),&
\\
&\hspace{-2cm}{\rm subject \; to} \quad \boldsymbol{\mu} \geq  \mathbf{0}, \,\,\vartheta \geq 0,&
\end{IEEEeqnarray}
where the corresponding Lagrangian function can be expressed as \eqref{eq:lagrangian}, in which the vector $\boldsymbol{\mu}$ and the scalar $\vartheta$ are non-negative Lagrange multipliers.

We can obtain the optimal number of transmit antennas $\overset{\hspace{0.5mm}\customstar}{M}$ by satisfying the \gls{kkt} conditions, which can be directly computed as: 
\begin{IEEEeqnarray}{rCl}\label{eq:KKTantenna}
     && \frac{\partial \mathcal{L}(M,\widetilde{\mathbf{R}},\boldsymbol{\mu},\vartheta)}{\partial M} = 0, \quad \frac{\partial \mathcal{L}(M,\widetilde{\mathbf{R}},\boldsymbol{\mu},\vartheta)}{\partial \widetilde{R}_k} = 0.
\end{IEEEeqnarray}

After some mathematical manipulations  from the computed derivatives via Eq. \eqref{eq:KKTantenna}, the following second-order equation can be obtained:
\begin{IEEEeqnarray}{rCl}
&& -(M-K)^2 \etaee \pbs\log(2) + (M-K)\sum_{k=1}^K(\mu_k+1) \nonumber \\ 
&&    -(\etaee\varrho+\vartheta)\sigma^2(K_1+1)\log(2)\sum_{k=1}^K \frac{\mathbf{v}^H\mathbf{A}_k\mathbf{v}}{\mathbf{v}^H\mathbf{B}\mathbf{v}} = 0
\end{IEEEeqnarray}

\begin{figure*}[!t]
\begin{equation}\label{eq:Mopt}
\overset{\hspace{0.5mm}\customstar}{M} = \left[\frac{ \scalebox{0.8}{$\displaystyle \sum_{k=1}^K$}  (\mu_k+1) + \sqrt{\left(\scalebox{0.8}{$\displaystyle \sum_{k=1}^K$}(\mu_k+1)\right)^2 - 4\etaee\pbs\log(2)^2\sigma^2(K_1+1) (\etaee \varrho+\vartheta) \scalebox{0.8}{$\displaystyle \sum_{k=1}^K$} \frac{\mathbf{v}\mathbf{A}_k \mathbf{v} }{\mathbf{v}\mathbf{B} \mathbf{v}}  }}{2\etaee\pbs\log(2)} + K\right]^{\scalebox{.9}{$M_{\max}$}}_{\scalebox{.9}{$K+1$}}, 
\end{equation}
\hrule
\end{figure*}
Finally, the optimal number of transmit antennas $\overset{\hspace{0.5mm}\customstar}{M}$ in closed-form is given at the top of this page by Eq. \eqref{eq:Mopt}.

Notice that for the proposed solution be feasible in practice, $\overset{\hspace{0.5mm}\customstar}{M}$ must be limited in the range of $K+1$ and $M_{\max}$ since the constraint of the \gls{zf} precoding, \eqref{c:3}, should be attained and $M$ should be lower than the maximum number of antennas available at the \gls{bs}, $M_{\max}$.

Proceeding by substituting the calculated derivatives, Eq. \eqref{eq:KKTantenna}, into the power equation $p_k$, Eq. \eqref{eq:pk}, we obtain the optimal power allocation closed-form expression:
\begin{equation} \label{eq:popt}
\overset{\hspace{-1.8mm}\customstar}{p_k}  = \left[ \dfrac{1+\mu_k}{\log(2)(\etaee\varrho +\vartheta)} -\frac{\sigma^2(K_1+1)}{M-K}\dfrac{\mathbf{v}^H\mathbf{A}_k\mathbf{v}}{\mathbf{v}^H\mathbf{B}\mathbf{v}} \right]^+.
\end{equation}

\vspace{2mm}
\noindent\textbf{\textit{Lagrange multipliers Updating}}. Although the analytical expressions for the optimal number of transmit antenna, $\overset{\hspace{0.5mm}\customstar}{M}$, and power allocation, $\overset{\hspace{-1.8mm}\customstar}{p_k} $, have been derived, these expressions are related to the Lagrangian multipliers. Herein, we apply the widely-used sub-gradient method to update the Lagrangian multipliers; it means that $\mu_k$ and $\vartheta$ should decrease if the gradients are positives, {\it i.e.}, $\nabla_{\mu_k} \mathcal{L} > 0$ and $\nabla_{\vartheta} \mathcal{L} > 0$, and vice versa. The values of $\mu_k$ and $\vartheta$ at the $\ell$-th iteration  
will be
updated according to
\begin{equation} \label{eq:multipliers}
\begin{aligned}
     \mu_k^{(\ell)} &= \left[ \mu_k^{(\ell-1)} - \frac{\left(R_k - R_{\min} \right)}{10}    \right]^+
    \\
    \vartheta^{(\ell)} &= \left[ \vartheta^{(\ell-1)} - \frac{\sqrt{\ell}}{10}  \left(\ptx - \sum_{k=1}^K p_k \right)  \right]^+,
\end{aligned}
\end{equation}

\begin{algorithm}   
\caption{Proposed Solution for $M$ and $\mathbf{p}$ Optimization}
\begin{algorithmic}
\State \textbf{Input:} $M$, $\sigma^2$, $K_1$, $\mathbf{p}$, $\mathbf{v}$, $\mathbf{B}$, $\mathbf{A}_k$, $\forall k$ $\in \{1,\dots,K\}$
\Repeat
\State \textbf{Step 0:} Compute $\etaee$ by \eqref{eq:P1}; 
\State \textbf{Step 1:} Compute  $M$ by \eqref{eq:Mopt};
\State \textbf{Step 2:} Compute $\mathbf{p}$ by \eqref{eq:popt};
\State \textbf{Step 3:} Update the Lagrangian Multipliers by \eqref{eq:multipliers};
\Until $\etaee$ converge
\State \textbf{Outputs:} $\overset{\hspace{0.5mm}\customstar}{M}$ and $\overset{\hspace{0mm}\customstar}{\mathbf{p}}$
\end{algorithmic}
\label{alg:p-M-optz}
\end{algorithm}

\begin{algorithm}  
\caption{\gls{ee} Maximization Complete Solution Algorithm}
\begin{algorithmic}
\State \textbf{Input:}  $N$, $\ptx$, $\sigma^2$, $K_1$, $\alpha_G$, $\boldsymbol{\alpha}_F$, $\boldsymbol{\alpha}_D$, $\mathbf{a}_N$, $\mathbf{F}$ 
\State  \textbf{Step 0:}  Initialize $M$, $\mathbf{p}$, and $\mathbf{v}$ to feasible values;
\State \textbf{Step 1:} Compute $\mathbf{B}$ and $\mathbf{A}_k$ as \eqref{eq:B} and \eqref{eq:Ak} respectively;
\Repeat
\State \textbf{Step 2:} Compute $\mathbf{v}$ according to Algorithm \ref{alg:phase-optz};
\State \textbf{Step 3:} Update $\mathbf{A}_k$ as \eqref{eq:Ak};
\State \textbf{Step 4:} Compute $\mathbf{p}$ and $M$ by Algorithm \ref{alg:p-M-optz}; 
\Until the objective function in \eqref{eq:P1} converge
\State {\bf Outputs:} $\overset{\hspace{0.5mm}\customstar}{M}$, $\overset{\hspace{0mm}\customstar}{\mathbf{p}}$ and $\overset{\hspace{0mm}\customstar}{{\bf v}}$
\end{algorithmic}
\label{alg:complete} 
\end{algorithm}

\subsection{Complexity}\label{sec:complexity}

We provide the computational complexity analysis of the proposed Algorithms \ref{alg:phase-optz}, \ref{alg:p-M-optz}, and \ref{alg:complete}. Concerning to the Algorithm \ref{alg:phase-optz}, it is known that the complexity to update   $\boldsymbol{\gamma}$, $\mathbf{u}$, and $\boldsymbol{\beta}$ results in 
the same order, given by $\mathcal{O}(KN^2)$. Additionally, solve \hyperref[eq:P3]{$\mathcal{P}_3$} by Eq. \eqref{eq:thetaopt} has the complexity of
$C^{\rm ANA} = \mathcal{O}(N(N-1))$, while solve \hyperref[eq:P3]{$\mathcal{P}_3$} with \gls{sfp} methodology has complexity of $C^{\rm SFP} = \mathcal{O}(N)$. Furthermore, Algorithm \ref{alg:p-M-optz} has a complexity of $\mathcal{O}(3I_2KN^2)$. Therefore, the complexity of the proposed \gls{ris} phase shifts optimization $\mathbf{v}$ procedure, as well as for the number of active antennas $M$ and power allocation $\mathbf{p}$ optimization, and the complete proposed solution procedure are given in Table \ref{tab:complexity}.

\begin{table}[!ht]
\caption{Complexity for the Optimization Method}
\centering
\begin{tabular}{c|c}
\hline
\bf Alg.   & \bf Complexity  \\
\hline
\bf \ref{alg:phase-optz}   & \small $\mathcal{O}(I_1(KN^2 + \hat{I}_1(2KN^2+\bar{I}_1 C^{\{i\}})))$\\
\bf \ref{alg:p-M-optz}   & \small $\mathcal{O}(3I_2KN^2)$ \\
\bf \ref{alg:complete}  &  \small $\mathcal{O}(I_3( (I_1(KN^2 + \hat{I}_1(2KN^2+\bar{I}_1 C^{\{i\}})))  + 3I_2KN^2))$\\
\hline
\end{tabular}
\label{tab:complexity}
\end{table}
\noindent where $I_1$ and $\hat{I}_1$ represent the number of iterations for the inner and outer layers of Algorithm \ref{alg:phase-optz}, while  
$\bar{I}_1$ is the number of iterations for optimizing $\mathbf{v}$, and $i \in \{\textsc{sfp}, \textsc{ana}\}$ holds for \gls{sfp} and Analytical approach, respectively; $I_2$ is the number of iterations for Algorithm \ref{alg:p-M-optz} and $I_3$ is the number of iterations for Algorithm \ref{alg:complete}.

\section{Simulation Results}\label{sec:simul}
In this section, we aim to investigate the performance of the multi-user \gls{ris}-aided \gls{m-mimo} system. We denote a system configuration setup where the \gls{ue}s' localization is  
fixed concerning their positions, \gls{aoa}, and \gls{aod} for azimuth/elevation as illustrated in Fig. \ref{fig:RIScell}. For one single system configuration setup, the instantaneous-\gls{csi} optimization-based approach is averaged over $\mathcal{T}=500$ realizations of $\mathbf{D}$ and $\G_{\rm NLoS}$. Moreover, we assume the \gls{ue}s are uniformly distributed in a circular area with the center in $(100;\,50)\,m$ and a radius of $r=15$ $m$, while the \gls{ris} is located at $(100;\, 0)\,m$ and the \gls{bs} is located at $(0;\, 0)\,m$. In each setup, the \gls{aod} angles were uniformly distributed as: $\phi^{r,k}_{\rm{AoD}} \sim \mathcal{U}[0,\frac{\pi}{2}]$, $\varphi^{r,k}_{\rm{AoD}} \sim \mathcal{U}[-\frac{\pi}{3},\frac{\pi}{3}]$, and $\varphi^{bs}_{{\rm AoD}} \sim \mathcal{U}[-\frac{\pi}{2},\frac{\pi}{2}]$; besides, the \gls{aoa} angles $\phi^r_{{\rm AoA}} \sim \mathcal{U}[0,\frac{\pi}{2}]$ and $\varphi^r_{{\rm AoA}} \sim \mathcal{U}[-\frac{\pi}{2},\frac{\pi}{2}]$. All presented results for statistical \gls{csi}-optimization have been averaged over $\mathcal{S}=50$ different setups. Table \ref{t:param} summarizes the adopted values for the main simulation parameters.
 
\begin{table}[htbp!] 
\caption{Adopted Simulation Parameters.}
\small
\label{t:param} 
\begin{center}
\begin{tabular}{ll}
\toprule
\bf Parameter  & \bf  Value\\
\hline \hline
\multicolumn{2}{c}{\bf \gls{ris}-Aided \gls{m-mimo} System}\\
\hline
Max. Power Budget at \gls{bs}                 &   $P_{\textsc{tx}}  \in \{20;\, 50\}$ [dBm]\\
Noise variance                          & $\sigma^2 = -95$ dBm  \\
Numbers of \gls{ue}s                          & $K = 10$\\
Max. \# Antennas at BS                       &   $M_{\max} = 256$\\
\# Reflecting meta-surfaces             &   $N =  100$\\
Target rate                             &   $R_{\min} = 1$ bps/Hz\\
Power inefficiency                      &   $\varrho$ = 1.2 \cite{EEdebbah}\\
Fixed power consumption                 & $P_{{\rm FIX}}$ = 9 dBW \cite{EEdebbah}\\
Power \gls{bs} antenna activation       & $P_{\rm{BS}}$ = 1 W \cite{7031971}\\
Power \gls{ris} element activation       & $P_{\rm{RIS}}$ = 10 dBm \cite{EEdebbah,EEli}\\
\hline
\multicolumn{2}{c}{ \bf Channel Parameters}\\
\hline
Path-loss models& $\alpha_G = -25\log
(d_g)$\\
& $\alpha_{F,k} = -10.6-20\log(d_{F,k})$ \\
& $\alpha_{D,k}=-35.6 - 40\log(d_{D,k})$
\\
Matrix $\G$ (\gls{bs}-\gls{ris}) & Rician fading\\
Matrix $\D$ (\gls{bs}-\gls{ue}s) & Rayleigh fading\\
Matrix $\F$ (\gls{ris}-\gls{ue}s) & Purely \gls{los}\\
Rician coefficient  & $K_1=3.5$\\
\hline
\gls{mcs} & $\mathcal{T} = {500}$ realizations\\
Setups & $\mathcal{S}=50$ setups\\
\toprule
 \end{tabular}
 \end{center}
\end{table}

Aiming to highlight the proposed method, we simulate four different variable optimization strategies denoted as
\begin{enumerate}
\item {\bf p  optz}: only the power of the \gls{ue}s, $\mathbf{p}$, is optimized;
\item \textbf{p, v optz}.:  the power of the \gls{ue}s, $\mathbf{p}$, and the \gls{ris} phase shifts, $\mathbf{v}$, are optimized;
\item \textbf{p, {\bf \it M} optz}.: The power of the \gls{ue}s, $\mathbf{p}$, and the number of antennas, $M$, are optimized;
\item \textbf{p, v, {\bf \it M} optz}.: The power of the \gls{ue}s, $\mathbf{p}$, the number of antennas $M$, and the \gls{ris} phase shifts 
$\mathbf{v}$ are optimized sequentially.
\end{enumerate}

It should be noted that when $M$, $\mathbf{v}$ or $\mathbf{p}$ are not optimized, they are assigned random feasible values. In addition, for comparison purposes, we also consider the {\bf random all} allocation approach, where all variables ({$\mathbf{p}$, $\mathbf{v}$, and $M$}) are assigned random feasible values. Specifically, $\mathbf{p}$ is assigned values that satisfy constraint $\eqref{c:1}$, $M$ obeys the constraint $\eqref{c:3}$, and $\mathbf{v}$ adheres to constraint $\eqref{c:4}$. {In the forthcoming sub-sections, the numerical results will be presented as follows: we explore the achievable \gls{ee} of the \gls{ris}-aided \gls{m-mimo} system alongside its corresponding average number of active antennas, power consumption, and attained rates. This structured approach aims to provide a clear and organized understanding of the obtained outcomes we are interested in.}

\subsection{Attainable Energy Efficiency} \label{ss:avEE}

Fig. \ref{fig:avEE} illustrates the achievable \gls{ee} performance as a function of the transmit 
power budget at \gls{bs}, $\ptx$. We evaluated the performance for the four variable optimization strategies: 1) only $\mathbf{p}$; 2) $\mathbf{p}$ and $\mathbf{v}$; 3) $\mathbf{p}$ and $M$; and 4) $\mathbf{p}$, $\mathbf{v}$ and $M$ optimization, denoted as yellow, red, green, and blue color respectively. As benchmarks, we adopted the random all ($\mathbf{p}$, $\mathbf{v}$, and $M$) strategy denoted herein as black color curves.

\begin{figure}[!ht]
\normalsize
\centering
\includegraphics[width=8.85cm]{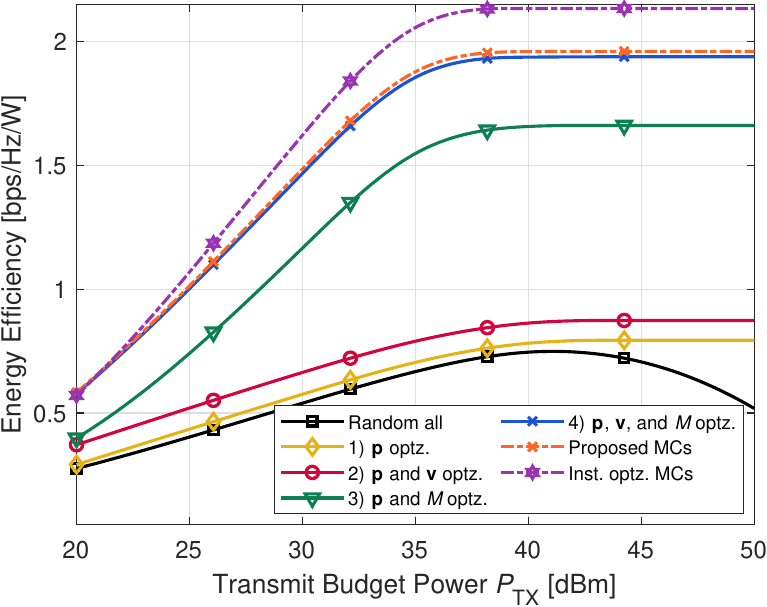}
\caption{Average \gls{ee} {\it vs} the transmit 
power budget at the \gls{bs} ($\ptx$). Performance evaluation of the proposed algorithm for four different approaches: 1) only $\mathbf{p}$; 2) $\mathbf{p}$ and $\mathbf{v}$; 3) $\mathbf{p}$ and $M$; and 4) $\mathbf{p}$, $\mathbf{v}$ and $M$ optimization denoted as yellow, red, green, and blue color respectively. Random $\mathbf{p}$, $\mathbf{v}$, and $M$ and the instantaneous optimization strategy are denoted as black and purple, respectively.}
\label{fig:avEE}
\end{figure}

{Moreover, we adopted the instantaneous \gls{csi}-based optimization strategy, for gap-performance comparison purposes, which has been implemented based on Algorithm 1 proposed in \cite{9966649} represented as purple color curves, which is projected specifically for the scenarios multi-user \gls{ris}-aided with \gls{zf} system, as the same adopted herein.} We also plotted the average instantaneous \gls{ee} obtained from the proposed statistical \gls{csi}-based optimization approach, evaluated from the \gls{mcs}, and depicted in orange color, in order to highlight the efficacy of the proposed method.

First and foremost, it must be noticed the consistency of the \gls{ee} computed utilizing the \gls{er} lower bound, Eq. \eqref{eq:rateLB}, which undoubtedly proves that $R_k > \widetilde{R}_k$. This is evident as the obtained \gls{ee} \gls{mcs} curve is slightly higher than the \gls{ee} evaluated from the \gls{er} equation, demonstrating the tightness of the derivation proposed in \cite{Zhi2022} and affirming the potential of the statistical \gls{csi}-based optimization for enhancing \gls{ee}. Regarding the different strategies for statistical \gls{csi}-based optimization, it is evident that the proposed algorithm for strategy 4) can outperform all other strategies with statistical \gls{csi} optimization in terms of \gls{ee} enhancement, achieving a significant improvement by increasing the \gls{ee} from $\approx 0.74$ bps/Hz/W with the random solution (in the maximum value) to $\approx 1.96$ bps/Hz/W. This translates to a remarkable \gls{ee} gain of approximately $165\%$. Since it optimizes all variables $\mathbf{p}$, $M$, and $\mathbf{v}$ sequentially, such a result was expected. Besides, given the high value assumed by $M_{\max}$ as in practical \gls{ris}-aided \gls{m-mimo} scenarios, there is a significant probability of obtaining a high value of $M$ in the random realizations for both strategies 1) and 2) as depicted in Fig. \ref{fig:nbrAntennas}, and analyzed in details in subsection \ref{sec:M_optz}. Thus, when $M$ is high, the energy consumption is significantly increased implying poor \gls{ee}. Accordingly, optimizing $M$ becomes of paramount importance. 

{A notable trend becomes evident when examining the \gls{ee} of all schemes, except the Random all strategy. Initially, \gls{ee} experiences an upward trajectory before stabilizing as the transmit budget power $P_{\rm TX}$ of \gls{bs} increases. This behavior arises because \gls{ee} does not inherently exhibit a strict monotonically increasing relationship with the transmit budget power. Therefore, when $P_{\rm TX} \geq 40$ dBm, the algorithm can control any surplus transmit power once it is superfluous as its utilization would compromise \gls{ee}. This explains why the Random all scheme provides a pronounced decline in \gls{ee} performance as the transmit budget power scales significantly since this approach deploys all the power available} \footnote{Power allocation is not subject to optimization in Random all approach.}.

Moreover, in Fig. \ref{fig:avEE}, one can infer that the impact of optimizing the \gls{ris} phase shifts is not negligible on the attainable \gls{ee}, even considering the statistical optimization. It is noteworthy that the achieved improvement in scenarios in which the number of active \gls{bs} antennas $M$ assumes low values (optimized $M$) is considerably greater compared to those scenarios where $M$ assumes high values. This can be readily verified by examining the obtained gain in the \gls{ee} curve of strategy 1) compared to the curve of strategy 2), as well as in the curve of strategy 3) concerning the curve of strategy 4). 

Based on Fig. \ref{fig:avEE}, one can see the potential to achieve higher \gls{ee} through the statistical \gls{csi} optimization approach since this methodology enables us to reduce the overhead associated with the phase shifts optimization during each channel coherence time while maintaining the fixed $\mathbf{v}$ over multiples channel coherence time with a minimal performance loss. As a result, the \gls{ee} decreases from 2.13 to 1.96 bps/Hz/W, representing a loss of $\approx 7.98\%$, in high power regimes.

\subsection{Average Number of Active Antennas}\label{sec:M_optz}

Fig. \ref{fig:nbrAntennas} shows the number of average active antennas $M$ for the optimization strategies 1), 2), 3), and 4), {as well as for the Random all scheme}. It is noteworthy that strategies {Random all}, 1), and 2) exhibit a significantly higher average number of active antennas compared to the average number of active antennas in strategies 3) and 4). 
It occurs due to the random assignment for $M$ in each setup $\mathcal{S}$, and since strategies {Random all}, 1), and 2) do not optimize $M$, there is a high probability for $M$  to be higher than that 
in strategies 3) and 4). This finding helps explain why strategies {Random all}, 1), and 2) attain poor \gls{ee} as illustrated in Fig. \ref{fig:avEE}, and higher rates as analyzed further ahead in Fig. \ref{fig:rateCCDF}.

\begin{figure}[ht!] 
\centering
\normalsize
\includegraphics[width=8.85cm]{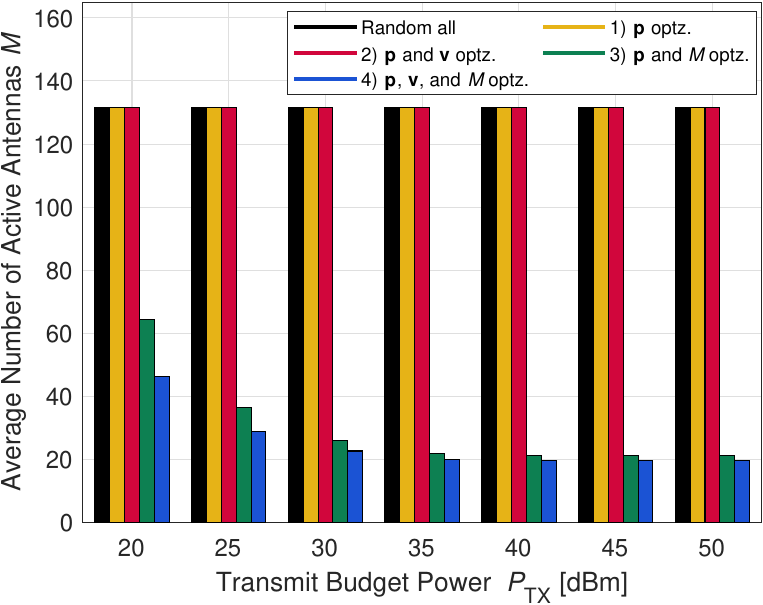}
\caption{{Average number of antennas for four optimization strategies.}}
\label{fig:nbrAntennas}
\end{figure}

Furthermore, by optimizing the \gls{ris} phase shifts 
remarkably reduce the average number of active BS antennas, mainly in low-power regimes, $e.g.$, by comparing strategy 3) with strategy 4), at $\ptx=20$ dBm, the average number of antennas is reduced from 131 to 46. These results highlight the potential of \gls{ris} deployment and optimization.

\subsection{Power Consumption} \label{ss:avPower}

Fig. \ref{fig:avPower} depicts the average percentage of the transmit power budget utilized for transmission versus the transmit power budget. In the low-budget power regime, all strategies deplete the available power budget; however, as the transmit
power budget increases beyond 40 dBm, the algorithm starts to minimize the power consumption. This behavior is expected since aiming to achieve the maximum \gls{ee}, the available power budget should not be fully utilized. 

\begin{figure}[!ht]
\centering
\normalsize
\includegraphics[width=8.5cm]{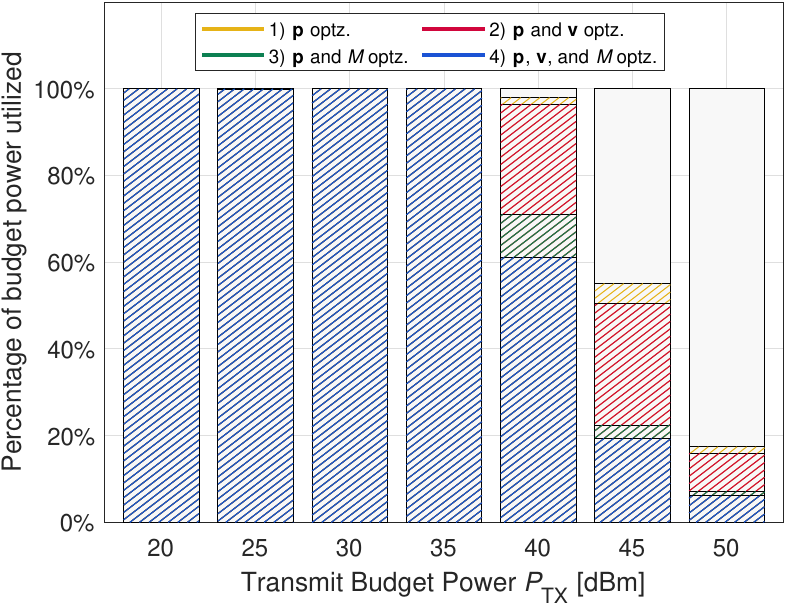}
\caption{Percentage of the transmit power budget utilized {\it vs} the transmit power budget at \gls{bs}. The colors in the graph are associated with the strategy according to Fig. \ref{fig:avEE}. }
\label{fig:avPower}
\end{figure}

Additionally, this strategy is directly accountable for the constant \gls{ee} observed in high-power budget regimes, as presented in Fig. \ref{fig:avEE}. Furthermore, it becomes apparent that optimizing the number of antennas $M$ is of utmost importance to save a significant amount of power, reducing the power consumption substantially, from $100\%$ (as in the Random all Strategy case) to $61\%$, $19\%$, and $6\%$ in the best strategy, for $\ptx=$ 40, 45, and 50 dBm respectively. Consequently, this strategy also implies high \gls{ee} gain, as presented in Fig. \ref{fig:avEE}. Moreover, one can see the substantial impact on power consumption reduction caused by optimizing the \gls{ris} phase shifts. This power reduction 
corroborates the potential of \gls{ris} deployment and configuration, resulting in a remarkable \gls{ee} gain that cannot be obtained in non-\gls{ris} scenarios.

\subsection{Distribution on the Attainable Data Rates}\label{ss:rateCCDF}

\begin{figure*}[!htb]
\centering
\begin{minipage}[b]{0.48\textwidth}
\centering
\includegraphics[width=8.3cm]{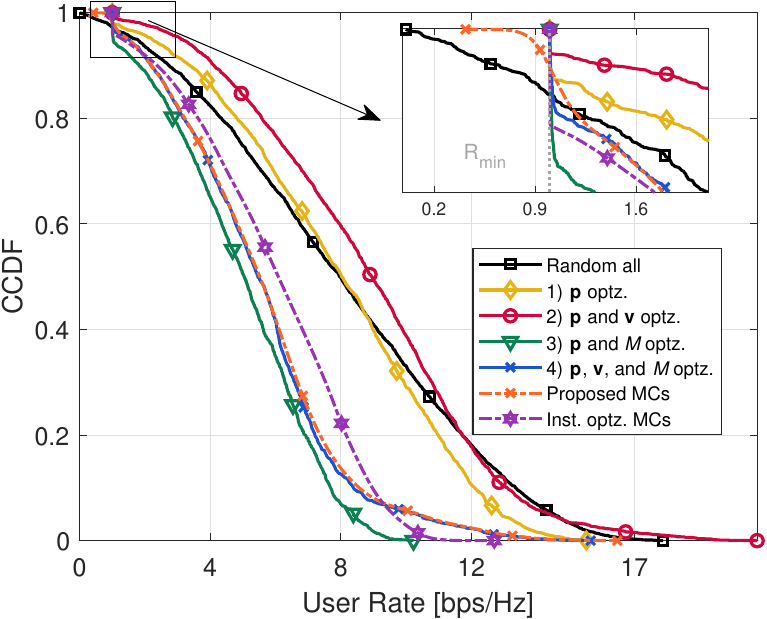}
\caption{\gls{ccdf} of the \gls{er} $\widetilde{R}_k$ for four different strategies with statistical \gls{csi}-based optimization (solid lines) and the instantaneous \gls{ue} rate $R_k$ for instantaneous \gls{csi}-based optimization (dashed lines), with $R_{\min}$ = 1 bps/Hz.}
\label{fig:rateCCDF}
    \end{minipage}%
    \quad
    \begin{minipage}[b]{0.48\textwidth}
        \centering
    \includegraphics[width=8.3cm]{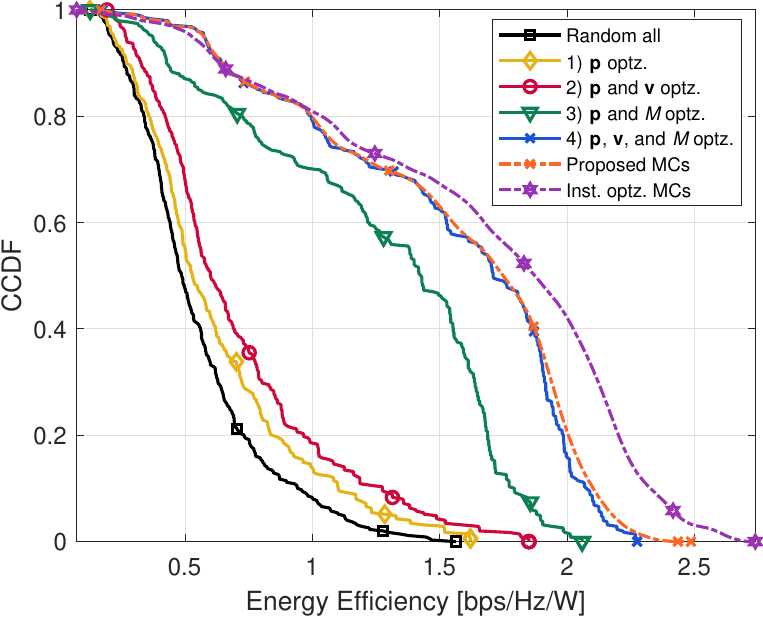}
        \caption{{\gls{ccdf} of the \gls{ee} for four different optimization strategies with statistical \gls{csi}-based optimization (solid lines) and for instantaneous \gls{csi}-based optimization (dashed lines), with $R_{\min}=1$ bps/Hz.}}
        \label{fig:EECCDF}
    \end{minipage}
\end{figure*}

Fig. \ref{fig:rateCCDF} reveals the \gls{ccdf} of the \gls{ue}' \gls{er} $\widetilde{R}_k$ over $\mathcal{S}=50$ different realizations setups, represented by the solid lines. Besides,
dashed lines indicate the computed instantaneous \gls{ue}' rate, $R_k$, over $\mathcal{T}=500$ \gls{mcs} for $\mathbf{D}$ and $\mathbf{G}_{{\rm NLoS}}$, in the range of $\ptx$ from 20 to 50 dBm. Notice that the proposed Statistical \gls{csi}-based optimization method necessarily obeys the constraint \eqref{c:2} for all Strategies solutions. This fulfillment is primordial to 
guarantee appropriate \gls{qos} for all \gls{ue}s. Moreover, one can see that Strategies {Random all}, 1), and 2) are capable of providing higher rates than Strategies 3) and 4). 

This is due to the consumed power in Strategies 1) and 2) being higher than that in Strategies 3) and 4), as shown in Fig. \ref{fig:avPower}. However, one can see that although Strategy 4) has a much lower number of active antennas and lower power consumption than Strategy 3), Strategy 4) still has a feasible probability of attaining higher rates than Strategy 3). This is due to the \gls{ris} phase shifts optimization, which can enhance some \gls{ue}s by maximizing the \gls{er}, as discussed in the subsection \ref{ss:Phase}. We can also notice from Fig. \ref{fig:rateCCDF} that the instantaneous \gls{ue} rate $R_k$ obtained through statistical \gls{csi}-based optimization (orange dashed curve) can achieve lower rates than $R_{\min}$, implying a violation of \eqref{c:2}, in contrast to the data rates obtained through the instantaneous \gls{csi}-based optimization (purple dashed curve), which are always higher or equal than $R_{\min}$.  

{Additionally, in Fig. \ref{fig:EECCDF}, we present the \gls{ccdf} of \gls{ee}. This graphical representation corroborates that optimization strategies 3) and 4) can effectively fine-tune the number of active antennas and power consumption through the proposed algorithm, leading to enhanced \gls{ee} at the cost of a tolerable reduction in data rates. Furthermore, it is evident that strategies such as Random all, 1), and 2), where the number of antennas is not optimized, yield lower \gls{ee} rates. This is attributed to the fact that the majority of setups in $\mathcal{S}$ feature a higher number of active BS antennas in comparison to the active antennas achieved by strategies 3) and 4).}

\subsection{Rician Coefficient Factor} \label{sec:RicianFactor}

{Fig. \ref{fig:Rician} illustrates the achievable \gls{ee} performance as a function of power budget $\ptx$ for three different Rician factors $K_1$. Initially, it becomes evident that in the worst-case scenario, namely when $K_1 = 0$, the greatest disparity between statistical \gls{csi} optimization and instantaneous \gls{csi} optimization occurs. 

\begin{figure}[ht!] 
\centering
\normalsize
\includegraphics[width=8.7cm]{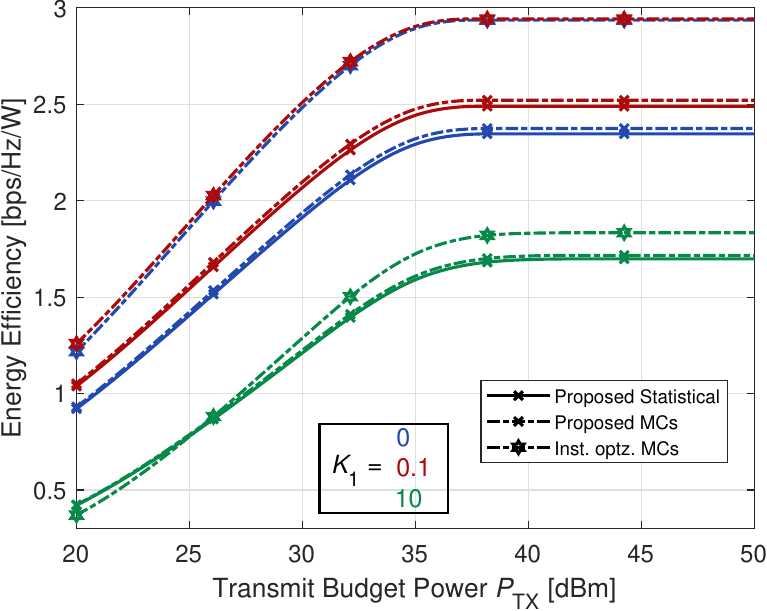}
\vspace{-.4cm}
\caption{Average \gls{ee} {\it vs} the transmit power budget at the BS ($\ptx$). Performance evaluation of the proposed algorithm for three different values of Rician factor $K_1$.}
\label{fig:Rician}
\end{figure}

This observation is justified by considering a scenario with no \gls{los} component for the \gls{bs}-\gls{ris} link. In such a case, optimizing the \gls{ris} phase shifts becomes inefficient, where the constants in \eqref{eq:B} and \eqref{eq:Ak} reduce, respectively, to $\mathbf{B} = \frac{1}{N} \mathbf{I}_N$ and $\mathbf{A}_k \approx \frac{1}{N(N \alpha_G \alpha_{F,k}+ \alpha_{D,k})} \mathbf{I}_N$. Therefore, the statistical \gls{csi} optimization strategy becomes irrelevant in this scenario. Furthermore, a slight increase in the Rician factor can enhance the system's performance, underscoring the significant impact of optimizing the \gls{ris} phase shifts. However, it becomes apparent that the system's performance deteriorates when the Rician factor reaches considerably high values. This can be attributed to the dominance of the \gls{los} component in the system, wherein the channel matrix $\mathbf{G}$ becomes rank-deficient, leading to a reduction in multi-path diversity and consequently worsening the overall system performance.}

\subsection{{Computational} Complexity of Algorithm \ref{alg:phase-optz}}
In Fig. \ref{fig:nbrIterations}, we illustrate the \gls{er} {(objective function given by Eq. \eqref{eq:rateLB}} {\it vs.} the average number of iterations ($I_1$) attainable with the proposed procedure for \gls{ris} phase shifts optimization, Algorithm \ref{alg:phase-optz}. Additionally, we include a benchmark utilizing the Gradient approach algorithm proposed in \cite{Zhi2022}. {Herein, a backtrack line search strategy for step size determination is adopted, as in \cite{Zhi2022}.} {To ensure fairness in our comparison, we apply a consistent stopping criterion for both the proposed algorithm and the gradient-based approach. Specifically, the algorithm will terminate when the absolute difference between the objective function's values at iterations $n+1$ and $n$, denoted as $f^{(n+1)}$ and $f^{(n)}$ respectively, becomes less than $10^{-3}$, $i.e.$, $|f^{(n+1)}-f^{(n)}|<10^{-3}$.} It is apparent that the proposed algorithm achieves a gradual convergence with significantly fewer iterations than the Gradient approach, {\it i.e.,} the proposed optimization method for both \gls{sfp} and Analytical strategies needs $I_1 \approx {20}$ iterations for convergence. Moreover, one can observe that the proposed algorithm outperforms the Gradient approach by achieving $\sum_k\widetilde{R}_k\approx {42}$ bps/Hz, while the latter achieves around {$40$} bps/Hz after {80} iterations.

\begin{figure}[ht!] 
\centering
\normalsize
\includegraphics[width=8.85cm]{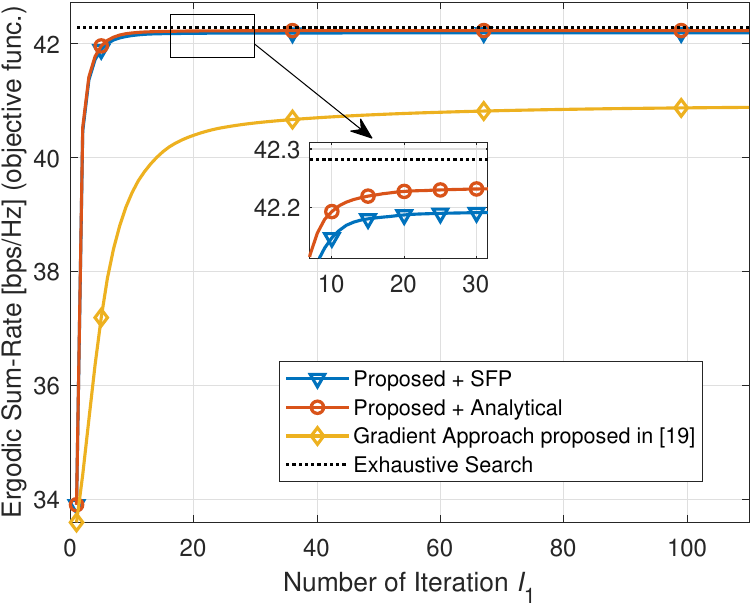}
\caption{The convergence behavior of the proposed Algorithm \ref{alg:phase-optz} for \gls{ee} maximization and the convergence behavior of the Gradient approach benchmark proposed in \cite{Zhi2022} over $\mathcal{S}=1000$ different setups. Here we adopt $M=M_{\max}$, $\ptx=30$ dBm, {$K=10$} and {$N=25$}.}
\label{fig:nbrIterations}
\end{figure}

We also can notice in Fig. \ref{fig:nbrIterations} the potential of the proposed scheme 
with the \gls{sfp} method solution since the performance gap is negligible regarding the optimal solution given analytically by Eq. \eqref{eq:thetaopt}, but resulting in a
significantly lower complexity according to Table \ref{tab:complexity} and \ref{t:time}. This clearly highlights the potential of the proposed solution. {Furthermore, Table \ref{t:time} confirms that the proposed solution with \gls{sfp} is appealing since the running time is lower than both the Gradient approach and instantaneous \gls{csi}-optimization}\footnote{{For instantaneous \gls{csi} optimization, we computed the average time across all $\mathcal{T}=500$ Monte Carlo simulations and $\mathcal{S}=50$ different setups.}}.
 
\begin{table}[!ht] 
\caption{Running Time Comparison of the Algorithms in Seconds}
\centering
\begin{tabular}{l|c|} 
\cline{2-2}
& \bf Elapsed Time [s]\\ \hline
\multicolumn{1}{|l|}{Proposed + \gls{sfp}}            & 0.0287             \\ \hline
\multicolumn{1}{|l|}{Proposed + Analytical}     & 0.0572               \\ \hline
\multicolumn{1}{|l|}{Gradient Approach \cite{Zhi2022}} & 0.0688               \\ \hline
\multicolumn{1}{|l|}{Instantaneous \gls{csi} Optz.} & 0.0732             \\ \hline
\end{tabular} \label{t:time}
\end{table}

\section{Conclusions}\label{sec:concl}

This work addresses the \gls{ee} maximization problem in a downlink \gls{ris}-aided \gls{m-mimo} communication system operating under generalized Rician \gls{bs}-\gls{ris} channels and \gls{zf} precoding. We proposed a statistical \gls{csi}-based optimization strategy that treats the \gls{ris} phase shifts and power allocation as a constant over multiple channel coherence time intervals, reducing the overhead of the variables optimization in every channel coherence time. Based on an \gls{er} lower-bound closed-form expression, the \gls{ee} optimization problem with power and \gls{qos} constraints is formulated. To circumvent the challenging coupled optimization variables and non-convex constraints, the proposed alternating-sequential optimization approach, based on the fractional programming techniques and dual Lagrangian method, optimizes the power allocation vector ($\bf p$), the number of \gls{bs} antennas ($M$), and the \gls{ris} phase shifts ($\bf v$) in an iterative manner based on analytical closed-form expressions until convergence. Our approach
offers an elegant methodology to gradually improve the system \gls{ee} performance by optimizing the system variables, demonstrating the potential of \gls{ris} deployment and optimization in \gls{m-mimo} systems since the proposed scheme can result in high data rates for the \gls{ue}s, lower power consumption and lower active antennas operating at \gls{bs}. Our numerical results validate the effectiveness of the proposed Algorithm, as it substantially reduces the computational complexity while maintaining an exciting trade-off in terms of \gls{ee} performance. Our approach incurs a performance loss of $\approx 7.98\%$, compared to the instantaneous \gls{csi}-based optimization method. Besides, our proposed Algorithm can achieve better \gls{ee} gains with lower complexity. 

\appendices

\section{Proof of Lemma \ref{lemma:1} } \label{app:lemma} 
Initially, we should notice the $\mathcal{P}_3$ can be equivalently transformed into
\begin{equation} \max_{\mathbf{v}} \;  \mathbf{v}^H \mathbf{C} \mathbf{v} = \max_{\mathbf{v}} \; {\rm Tr} \left(\mathbf{C} \mathbf{V} \right),
\end{equation}
with $\mathbf{V}=\mathbf{v}\mathbf{v}^H$, where the following constraints ${\rm rank}(\mathbf{V})=1$ $[\mathbf{V}]_{n,n}=1$, $\forall n\in \{1,\dots,N\}$, must be attained.
Proceeding, conveniently, we can rewritten $\mathbf{v}$ as the following manner
\begin{equation} \label{eq:appAeq1}
    \mathbf{v} = [\mathbf{v}_1^H,  e^{-j\theta_n},  \mathbf{v}_2^H]^H.
\end{equation}
where $\mathbf{v}_1$ $\in \mathbb{C}^{n-1}$ and $\mathbf{v}_2$ $\in \mathbb{C}^{N-n}$ are respectively defined as $\mathbf{v}_1 \triangleq \left[\mathbf{v}\right]_{1:(n-1)}$ and $\mathbf{v}_2 \triangleq \left[\mathbf{v}\right]_{(n+1):N}$. 
This alternative notation enables us to look for the $n$-th element of $\mathbf{v}$, $e^{j\theta_n}$. Therefore, by utilizing Eq. \eqref{eq:appAeq1}, one can rewrite $\mathbf{V} $ as:

\begin{equation}
    \mathbf{V} = 
 \mathbf{v}\mathbf{v}^H = \begin{bmatrix}
\mathbf{v}_1 \mathbf{v}_1^H & \mathbf{v}_1 e^{-j\theta_n} & \mathbf{v}_1 \mathbf{v}_2^H \\ 
e^{j\theta_n} \mathbf{v}_1^H & 1 & e^{j\theta_n} \mathbf{v}_2^H \\ 
\mathbf{v}_2 \mathbf{v}_1^H & \mathbf{v}_2 e^{-j\theta_n} & \mathbf{v}_2 \mathbf{v}_2^H
\end{bmatrix}.
\end{equation}

By defining the following variables 
\begin{equation} \label{eq:appAeq3}
\left \{
\begin{aligned}
\mathbf{C}_1 &= \left[\mathbf{C}\right]_{1:(n-1),1:(n-1)}  &\quad & \in \mathbb{C}^{(n-1) \times (n-1)}\\ 
\mathbf{c}_2 &= \left[\mathbf{C}\right]_{1:(n-1),n} &\quad  & \in \mathbb{C}^{(n-1) \times 1}\\ 
\mathbf{C}_3 &= \left[\mathbf{C}\right]_{1:(n-1),(n+1):N} &\quad & \in \mathbb{C}^{(n-1) \times (N-n)}\\ 
C_4 &=  \left[\mathbf{C}\right]_{n,n} &\qquad \quad  & \in \mathbb{C} \\ 
\mathbf{c}_5 &= \left[\mathbf{C}\right]_{n,(n+1):N} &\quad   &\in \mathbb{C}^{1 \times (N-n)} \\
\mathbf{C}_6 &= 
\left[\mathbf{C}\right]_{(n+1):N,(n+1):N} &\quad &\in \mathbb{C}^{(N-n) \times (N-n)},
\end{aligned}
\right.
\end{equation}
the matrix $\mathbf{C}$ can be rewritten in terms of these variables:

\begin{IEEEeqnarray}{rrCl}
\mathbf{C} = \begin{bmatrix}
\mathbf{C}_1 & \mathbf{c}_2 & \mathbf{C}_3 \\ 
\mathbf{c}_2^H &  
C_4 & 
\mathbf{c}_5 \\ 
\mathbf{C}_3^H & \mathbf{c}_5^H & \mathbf{C}_6 
\end{bmatrix},
\end{IEEEeqnarray}

Therefore, after some basic algebraic manipulations: 
\begin{IEEEeqnarray}{rCl} \label{eq:appAeq2}
     {\rm Tr}(\mathbf{C} \mathbf{V}) &=&  e^{-j\theta_n} ({\rm Tr}(\mathbf{c}_2^H \mathbf{v}_1) + {\rm Tr}(\mathbf{c}_5 \mathbf{v}_2 ))    
    \\
    &+& e^{j\theta_n}({\rm Tr} (\mathbf{c}_2 \mathbf{v}_1^H) + {\rm Tr}(\mathbf{c}_5^H \mathbf{v}_2^H)) + C_4 \nonumber
    \\
     && \hspace{-1.8cm} + {\rm Tr}\left( \mathbf{C}_3 \mathbf{v}_2 \mathbf{v}_1^H
    + \mathbf{C}_1 \mathbf{v}_1 \mathbf{v}_1^H 
    + \mathbf{C}_3^H \mathbf{v}_1 \mathbf{v}_2^H   
 \right) + {\rm Tr}\left( \mathbf{C}_6 \mathbf{v}_2\mathbf{v}_2^H \right), \nonumber 
\end{IEEEeqnarray}

Interestingly, Eq. \eqref{eq:appAeq2} leads us to see that maximize $\mathcal{P}_3$ w.r.t. $\theta_n$ is equivalently to
\begin{IEEEeqnarray}{rCl}
    \max_{\theta_n} \; {\rm Tr}(\mathbf{C} \mathbf{V}) =  \max_{\theta_n} \; 2 \mathbb{R}\{ e^{-j\theta_n}({\rm Tr}(\mathbf{c}_2^H \mathbf{v}_1) + {\rm Tr}(\mathbf{v}_2 \mathbf{c}_5)) \} \nonumber
\end{IEEEeqnarray}
whose optimal solution is straightforwardly given in closed-form by $\theta_n^\star = \angle \left({\rm Tr}(\mathbf{c}_2^H \mathbf{v}_1) + {\rm Tr}(\mathbf{v}_2 \mathbf{c}_5)
\right)$. Substituting the values of $\mathbf{c}_2$ and $\mathbf{c}_5$ given in Eq. \eqref{eq:appAeq3}, and $\mathbf{v}_1$ and $\mathbf{v}_2$, the proof is complete.


\begin{thebibliography}{10}
\providecommand{\url}[1]{#1}
\csname url@samestyle\endcsname
\providecommand{\newblock}{\relax}
\providecommand{\bibinfo}[2]{#2}
\providecommand{\BIBentrySTDinterwordspacing}{\spaceskip=0pt\relax}
\providecommand{\BIBentryALTinterwordstretchfactor}{4}
\providecommand{\BIBentryALTinterwordspacing}{\spaceskip=\fontdimen2\font plus
\BIBentryALTinterwordstretchfactor\fontdimen3\font minus
  \fontdimen4\font\relax}
\providecommand{\BIBforeignlanguage}[2]{{%
\expandafter\ifx\csname l@#1\endcsname\relax
\typeout{** WARNING: IEEEtran.bst: No hyphenation pattern has been}%
\typeout{** loaded for the language `#1'. Using the pattern for}%
\typeout{** the default language instead.}%
\else
\language=\csname l@#1\endcsname
\fi
#2}}
\providecommand{\BIBdecl}{\relax}
\BIBdecl

\bibitem{9122596}
S.~Gong, X.~Lu, D.~T. Hoang, D.~Niyato, L.~Shu, D.~I. Kim, and Y.-C. Liang,
  ``{Toward Smart Wireless Communications via Intelligent Reflecting Surfaces:
  A Contemporary Survey},'' \emph{IEEE Communications Surveys and Tutorials},
  vol.~22, no.~4, pp. 2283--2314, 2020.

\bibitem{8910627}
Q.~Wu and R.~Zhang, ``{Towards Smart and Reconfigurable Environment:
  Intelligent Reflecting Surface Aided Wireless Network},'' \emph{IEEE
  Communications Magazine}, vol.~58, no.~1, pp. 106--112, 2020.

\bibitem{marzetta2016fundamentals}
T.~Marzetta, E.~Larsson, H.~Yang, and H.~Ngo, \emph{{Fundamentals of Massive
  MIMO}}.\hskip 1em plus 0.5em minus 0.4em\relax Cambridge University Press,
  2016.

\bibitem{9349624}
W.~Jiang, B.~Han, M.~A. Habibi, and H.~D. Schotten, ``{The Road Towards 6G: A
  Comprehensive Survey},'' \emph{IEEE Open Journal of the Communications
  Society}, vol.~2, pp. 334--366, 2021.

\bibitem{9397776}
C.~D. Alwis, A.~Kalla, Q.-V. Pham, P.~Kumar, K.~Dev, W.-J. Hwang, and
  M.~Liyanage, ``{Survey on 6G Frontiers: Trends, Applications, Requirements,
  Technologies and Future Research},'' \emph{IEEE Open Journal of the
  Communications Society}, vol.~2, pp. 836--886, 2021.

\bibitem{He2022}
Y.~He, Y.~Cai, H.~Mao, and G.~Yu, ``{RIS-Assisted Communication Radar
  Coexistence: Joint Beamforming Design and Analysis},'' \emph{IEEE Journal on
  Selected Areas in Communications}, vol.~40, no.~7, pp. 2131--2145, July 2022.

\bibitem{RangLiu2022}
R.~Liu, M.~Li, Y.~Liu, Q.~Wu, and Q.~Liu, ``{Joint Transmit Waveform and
  Passive Beamforming Design for {RIS}-Aided {DFRC} Systems},'' \emph{IEEE
  Journal of Selected Topics in Signal Processing}, vol.~16, no.~5, pp.
  995--1010, Aug. 2022.

\bibitem{9814527}
J.~Kang, H.~Wymeersch, C.~Fischione, G.~Seco-Granados, and S.~Kim, ``{Optimized
  Switching Between Sensing and Communication for mmWave MU-MISO Systems},'' in
  \emph{2022 IEEE International Conference on Communications Workshops (ICC
  Workshops)}, 2022, pp. 498--503.

\bibitem{9963962}
W.~Lv, J.~Bai, Q.~Yan, and H.~M. Wang, ``{RIS-Assisted Green Secure
  Communications: Active RIS or Passive RIS?}'' \emph{IEEE Wireless
  Communications Letters}, vol.~12, no.~2, pp. 237--241, 2023.

\bibitem{8982186}
H.~Guo, Y.-C. Liang, J.~Chen, and E.~G. Larsson, ``{Weighted Sum-Rate
  Maximization for Reconfigurable Intelligent Surface Aided Wireless
  Networks},'' \emph{IEEE Transactions on Wireless Communications}, vol.~19,
  no.~5, pp. 3064--3076, 2020.

\bibitem{10136805}
T.~Ji, M.~Hua, C.~Li, Y.~Huang, and L.~Yang, ``{Robust Max-Min Fairness
  Transmission Design for IRS-Aided Wireless Network Considering User Location
  Uncertainty},'' \emph{IEEE Transactions on Communications}, vol.~71, no.~8,
  pp. 4678--4693, 2023.

\bibitem{9246254}
H.~Xie, J.~Xu, and Y.-F. Liu, ``{Max-Min Fairness in IRS-Aided Multi-Cell MISO
  Systems With Joint Transmit and Reflective Beamforming},'' \emph{IEEE
  Transactions on Wireless Communications}, vol.~20, no.~2, pp. 1379--1393,
  2021.

\bibitem{9681803}
T.~Jiang and W.~Yu, ``{Interference Nulling Using Reconfigurable Intelligent
  Surface},'' \emph{IEEE Journal on Selected Areas in Communications}, vol.~40,
  no.~5, pp. 1392--1406, 2022.

\bibitem{EEdebbah}
C.~Huang, A.~Zappone, G.~C. Alexandropoulos, M.~Debbah, and C.~Yuen,
  ``{Reconfigurable Intelligent Surfaces for Energy Efficiency in Wireless
  Communication},'' \emph{IEEE Transactions on Wireless Communications},
  vol.~18, no.~8, pp. 4157--4170, 2019.

\bibitem{EEzeng}
M.~Zeng, E.~Bedeer, O.~A. Dobre, P.~Fortier, Q.-V. Pham, and W.~Hao,
  ``{Energy-Efficient Resource Allocation for IRS-Assisted Multi-Antenna Uplink
  Systems},'' \emph{IEEE Wireless Communications Letters}, vol.~10, no.~6, pp.
  1261--1265, 2021.

\bibitem{EEli}
L.~You, J.~Xiong, D.~W.~K. Ng, C.~Yuen, W.~Wang, and X.~Gao, ``{Energy
  Efficiency and Spectral Efficiency Tradeoff in RIS-Aided Multiuser MIMO
  Uplink Transmission},'' \emph{IEEE Transactions on Signal Processing},
  vol.~69, pp. 1407--1421, 2021.

\bibitem{9548940}
M.~Forouzanmehr, S.~Akhlaghi, A.~Khalili, and Q.~Wu, ``{Energy Efficiency
  Maximization for IRS-Assisted Uplink Systems: Joint Resource Allocation and
  Beamforming Design},'' \emph{IEEE Communications Letters}, vol.~25, no.~12,
  pp. 3932--3936, 2021.

\bibitem{fotock2023energy}
R.~K. Fotock, A.~Zappone, and M.~D. Renzo, ``{Energy Efficiency in RIS-Aided
  Wireless Networks: Active or Passive RIS?}'' 2023.

\bibitem{Zhi2022}
K.~Zhi, C.~Pan, H.~Ren, and K.~Wang, ``{Ergodic Rate Analysis of Reconfigurable
  Intelligent Surface-Aided Massive MIMO Systems With ZF Detectors},''
  \emph{IEEE Communications Letters}, vol.~26, no.~2, pp. 264--268, 2022.

\bibitem{STkangda}
K.~Zhi, C.~Pan, G.~Zhou, H.~Ren, and K.~Wang, ``{Analysis and Optimization of
  RIS-aided Massive MIMO Systems with Statistical CSI},'' in \emph{2021
  IEEE/CIC International Conference on Communications in China (ICCC
  Workshops)}, 2021, pp. 153--158.

\bibitem{10167745}
A.~Subhash, A.~Kammoun, A.~Elzanaty, S.~Kalyani, Y.~H. Al-Badarneh, and M.-S.
  Alouini, ``{Optimal Phase Shift Design for Fair Allocation in RIS-Aided
  Uplink Network Using Statistical CSI},'' \emph{IEEE Journal on Selected Areas
  in Communications}, vol.~41, no.~8, pp. 2461--2475, 2023.

\bibitem{10072840}
C.~Yang, K.~Yu, and X.~Yu, ``{Energy Efficiency Optimization for Distributed
  RIS-Assisted MISO System Based on Statistical CSI},'' in \emph{2022 IEEE 22nd
  International Conference on Communication Technology (ICCT)}, 2022, pp.
  875--879.

\bibitem{9935294}
H.~Ren, X.~Liu, C.~Pan, Z.~Peng, and J.~Wang, ``{Performance Analysis for
  RIS-Aided Secure Massive MIMO Systems With Statistical CSI},'' \emph{IEEE
  Wireless Communications Letters}, vol.~12, no.~1, pp. 124--128, 2023.

\bibitem{emilsquare}
E.~Björnson and L.~Sanguinetti, ``{Rayleigh Fading Modeling and Channel
  Hardening for Reconfigurable Intelligent Surfaces},'' \emph{IEEE Wireless
  Communications Letters}, vol.~10, no.~4, pp. 830--834, 2021.

\bibitem{near}
E.~Björnson, O.~T. Demir, and L.~Sanguinetti, ``{A Primer on Near-Field
  Beamforming for Arrays and Reconfigurable Intelligent Surfaces},'' in
  \emph{2021 55th Asilomar Conference on Signals, Systems, and Computers},
  2021, pp. 105--112.

\bibitem{8746155}
Y.~Han, W.~Tang, S.~Jin, C.-K. Wen, and X.~Ma, ``{Large Intelligent
  Surface-Assisted Wireless Communication Exploiting Statistical CSI},''
  \emph{IEEE Transactions on Vehicular Technology}, vol.~68, no.~8, pp.
  8238--8242, 2019.

\bibitem{ngommimo}
H.~Q. Ngo, E.~G. Larsson, and T.~L. Marzetta, ``{Energy and Spectral Efficiency
  of Very Large Multiuser MIMO Systems},'' \emph{IEEE Transactions on
  Communications}, vol.~61, no.~4, pp. 1436--1449, 2013.

\bibitem{9497709}
Z.~Yang, M.~Chen, W.~Saad, W.~Xu, M.~Shikh-Bahaei, H.~V. Poor, and S.~Cui,
  ``{Energy-Efficient Wireless Communications With Distributed Reconfigurable
  Intelligent Surfaces},'' \emph{IEEE Transactions on Wireless Communications},
  vol.~21, no.~1, pp. 665--679, 2022.

\bibitem{8310563}
K.~Shen and W.~Yu, ``{Fractional Programming for Communication Systems—Part
  II: Uplink Scheduling via Matching},'' \emph{IEEE Transactions on Signal
  Processing}, vol.~66, no.~10, pp. 2631--2644, 2018.

\bibitem{Jong2012AnEG}
Y.~Jong, ``{An Efficient Global Optimization Algorithm for Nonlinear
  Sum-of-Ratios Problem},'' \emph{Optimization Online}, 2012.

\bibitem{8314727}
K.~Shen and W.~Yu, ``{Fractional Programming for Communication Systems—Part
  I: Power Control and Beamforming},'' \emph{IEEE Transactions on Signal
  Processing}, vol.~66, no.~10, pp. 2616--2630, 2018.

\bibitem{5447068}
Z.-q. Luo, W.-k. Ma, A.~M.-c. So, Y.~Ye, and S.~Zhang, ``{Semidefinite
  Relaxation of Quadratic Optimization Problems},'' \emph{IEEE Signal
  Processing Magazine}, vol.~27, no.~3, pp. 20--34, 2010.

\bibitem{8457308}
H.~Li, J.~Cheng, Z.~Wang, and H.~Wang, ``{Joint Antenna Selection and Power
  Allocation for an Energy-efficient Massive MIMO System},'' \emph{IEEE
  Wireless Communications Letters}, vol.~8, no.~1, pp. 257--260, 2019.

\bibitem{boyd}
S.~Boyd, S.~P. Boyd, and L.~Vandenberghe, \emph{{Convex Optimization}}.\hskip
  1em plus 0.5em minus 0.4em\relax Cambridge university press, 2004.

\bibitem{tang}
J.~Tang, J.~Luo, J.~Ou, X.~Zhang, N.~Zhao, D.~K.~C. So, and K.-K. Wong,
  ``{Decoupling or Learning: Joint Power Splitting and Allocation in MC-NOMA
  With SWIPT},'' \emph{IEEE Transactions on Communications}, vol.~68, no.~9,
  pp. 5834--5848, 2020.

\bibitem{7031971}
E.~Björnson, L.~Sanguinetti, J.~Hoydis, and M.~Debbah, ``{Optimal Design of
  Energy-Efficient Multi-User MIMO Systems: Is Massive MIMO the Answer?}''
  \emph{IEEE Transactions on Wireless Communications}, vol.~14, no.~6, pp.
  3059--3075, 2015.

\bibitem{9966649}
N.~I. Miridakis, T.~A. Tsiftsis, and R.~Yao, ``{Zero Forcing Uplink Detection
  Through Large-Scale RIS: System Performance and Phase Shift Design},''
  \emph{IEEE Transactions on Communications}, vol.~71, no.~1, pp. 569--579,
  2023.

\end{thebibliography}
\end{document}